\newcommand{\remove}[1]{}
\newcommand{\pvs}{\vspace{-8pt}}
  \newtheorem{theorem}{Theorem}
  \newtheorem{lemma}{Lemma}
\newcommand{\cmark}{\ding{51}}%
\newcommand{\xmark}{\ding{55}}%
\begin{document}


\title{Causal Consistency and Latency Optimality:\\Friend or Foe?}



%
%
%
%

\numberofauthors{1} 

\author{
%
%
\alignauthor
Diego Didona, Rachid Guerraoui, Jingjing Wang, Willy Zwaenepoel\\
       \affaddr{EPFL}\\
       \email{first.last@epfl.ch}
}
\date{1 March 2018}

\maketitle

\begin{abstract}
Causal consistency is an attractive consistency model for replicated data stores. 
It is provably the strongest model that tolerates partitions, it avoids the long latencies associated with strong consistency, and, especially when using
read-only transactions, it prevents many of the anomalies of weaker consistency models. 
Recent work has shown that causal consistency allows ``latency-optimal'' read-only transactions, that are nonblocking, single-version and single-round in terms of communication.
On the surface, this latency optimality  is  very appealing, as the vast majority of applications are assumed to have read-dominated workloads.

In this paper, we show that such ``latency-optimal'' read-only transactions induce an extra overhead on writes; the extra overhead is so high that performance is actually jeopardized, even in read-dominated workloads.
We show this result from a practical and a theoretical angle. 

First, we present a protocol that implements  ``almost laten- cy-optimal'' ROTs but does not impose on the writes any of the overhead of latency-optimal protocols. In this protocol, ROTs are nonblocking, one version and can be configured to use either two or one and a half rounds of client-server communication. We experimentally show that this protocol not only provides better throughput, as expected, but also surprisingly better latencies for all but the lowest loads and most read-heavy workloads.

Then, we prove that the extra overhead imposed on writes by latency-optimal read-only transactions is inherent, i.e., it is not an artifact of the design we consider, and cannot be avoided by {\em any} implementation of latency-optimal read-only transactions. We show in particular that this overhead grows linearly with the number of clients.
\end{abstract}

\section{Introduction}
Geo-replication is gaining momentum in industry~\cite{Nishtala:2013,Lu:2015,Noghabi:2016,Corbett:2013,Bacon:2017,Calder:2011,actor:2017,DeCandia:2007,Verbitski:2017} and academia~\cite{Kraska:2013,Crooks:2016b,Sovran:2011,Zhang:2013,Corbett:2013,Moniz:2017,Zhang:2015,Zhang:2016,Nawab:2015} as a design choice for large-scale data platforms to meet the strict latency and availability requirements of on-line applications~\cite{Rahman:2017,Terry:2013,Ardekani:2014}. Geo-replication aims to reduce operation latencies, by storing a copy of the data closer to the clients, and to increase availability, by keeping multiple copies of the data at different data centers (DC).

{\bf Causal consistency.} Causal consistency (CC)~\cite{Ahamad:1995} is an attractive consistency model for building geo-replicated data stores. On the one hand, it has an intuitive semantics and avoids many anomalies that are allowed under weaker consistency properties~\cite{Vogels:2009,DeCandia:2007}. On the other hand, it avoids the long latencies incurred by strong consistency (e.g., linearizability and strict serializability)~\cite{Herlihy:1990,Corbett:2013} and tolerates network partitions~\cite{Lloyd:2011}. In fact, CC is provably the strongest consistency that can be achieved in an always-available system~\cite{Mahajan:2011,Attiya:2015}. 
 CC is the target consistency level of many systems~\cite{Lloyd:2011,Lloyd:2013,Du:2013,Du:2014,Almeida:2013,Bravo:2017,Eunomia:2017}, it is used in platforms that support multiple levels of consistency~\cite{Balegas:2016,Li:2014}, and it is a building block for strong consistency systems~\cite{Balegas:2015} and formal checkers of distributed protocols~\cite{Gotsman:2016}.

{\bf Read-only transactions in CC.} High-level operations such as producing a web page often translate to multiple reads from the underlying data store~\cite{Nishtala:2013}. Ensuring that all these reads are served from the same consistent snapshot avoids undesirable anomalies, 

in particular, the following well-known anomaly: Alice removes Bob from the access list of a photo album and adds a photo to it but Bob reads the original permission and the new version of the album~\cite{Lloyd:2011,Lu:2016}.
Therefore, the vast majority of CC systems provide read-only transactions (ROTs) to read multiple items at once from a causally consistent snapshot~\cite{Lloyd:2011,Lloyd:2013,Du:2014,Almeida:2013,Akkoorath:2016,Zawirski:2015}. Large-scale applications are often read-heavy~\cite{Atikoglu:2012,Nishtala:2013,Noghabi:2016,Lu:2015}, and achieving low-latency ROTs becomes a first-class concern for CC systems.
 
Earlier CC ROT designs were either blocking~\cite{Du:2013,Du:2014,Akkoorath:2016,Almeida:2013} or  required multiple rounds of communications to complete~\cite{Lloyd:2011,Lloyd:2013,Almeida:2013}. 
Recent work on the COPS-SNOW system~\cite{Lu:2016} has, however, demonstrated that it is possible to perform causally consistent ROTs in a single round of communication, sending only one version of the keys involved, and in a nonblocking fashion. Because it exhibits these three properties, the COPS-SNOW ROT protocol was termed {\em latency-optimal (LO)}.
The protocol achieves LO by imposing additional processing costs on writes. One could argue that this is a correct tradeoff for the common case of read-heavy workloads, because the overhead affects the minority of operations 
and is thus
to the advantage of the majority of them. This paper sheds a different light on this tradeoff.   

~\noindent{\bf Contributions.} In this paper we show that the extra cost on writes is so high that so-called latency-optimal ROTs in practice exhibit higher latencies than alternative designs, even in read-heavy workloads. This extra cost not only reduces the available processing power, leading to lower throughput, but it also leads to higher resource contention, which results in higher queueing times, and, ultimately, in higher latencies.  
We demonstrate this counterintuitive result from two angles, a practical and a theoretical one.

1) From a practical standpoint, we show how an existing and widely used design of CC can be improved to achieve almost all the properties of a latency-optimal design, without incurring the overhead on writes that latency optimality implies. We implement this improved design in a system that we call Contrarian. Measurements in a variety of scenarios demonstrate that, for all but the lowest loads and the most read-heavy workloads, Contrarian provides better latencies and throughput than an LO protocol.

2) From a theoretical standpoint, we show that the extra cost imposed on writes to achieve LO ROTs is {\em inherent} to CC, i.e., it cannot be avoided by {\em any} CC system that implements LO ROTs. We also provide a lower bound on this extra cost in terms of communication overhead. Specifically, we show that the amount of extra information exchanged on writes potentially grows linearly with the number of clients.

The relevance of our theoretical results goes beyond the scope of CC. In fact, they apply to any consistency model strictly stronger than causal consistency, e.g., linearizability~\cite{Herlihy:1990}. Moreover, our result is relevant also for systems that implement hybrid consistency models that include CC~\cite{Balegas:2016} or that implement strong consistency on top of CC~\cite{Balegas:2015}.

~\noindent{\bf Roadmap.} The remainder of this paper is organized as follows. Section~\ref{sec:model} provides introductory concepts and definitions. Section~\ref{sec:sys} surveys the complexities involved in the implementation of ROTs. Section~\ref{sec:sys:contrarian} presents our Contrarian protocol. Section~\ref{sec:eval} compares Contrarian and an LO design. 
Section~\ref{sec:theory} presents our theoretical results.
Section~\ref{sec:rw} discusses related work. Section~\ref{sec:conclusion} concludes the paper. 
\section{System Model}
\label{sec:model}

\subsection{API}
\label{sec:model:api}
We consider a multi-version key value data store. We denote keys by lower case letters, e.g., $x$, and the versions of their corresponding values by capital letters, e.g., $X$. The key value store provides the following operations:

\pvs
~\\\noindent{\bf X $\gets$ GET(x): } A GET operation returns the value of the item identified by $x$. 
GET may return $\bot$  to show that there is no item yet identified by $x$. 

\pvs
~\\\noindent{\bf PUT(x, X): } A PUT operation creates a new version $X$ of an item identified by $x$. 
If item $x$ does not yet exist, the system creates a new item $x$ with value $X$. 

\pvs
~\\\noindent{\bf{ (X, Y, ...)  $\gets$  ROT (x, y, ...) :}} A ROT returns a vector ($X$, $Y$, ...) of versions of the requested keys ($x$, $y$, ... ).  
A ROT may return $\bot$ to show that some item does not exist.

~\\
In the remainder of this paper we focus on PUT and ROT operations.

\subsection{Causal Consistency}
\label{sec:model:cc}
The \emph{causality order} is a happens-before relationship between any two operations in a given execution~\cite{Lamport:1978,Ahamad:1995}. For any two operations $\alpha$ and $\beta$, we say that $\beta$ causally depends on $\alpha$, and we write $\alpha\leadsto \beta$, if and only if at least one of the following conditions holds: $i)$ $\alpha$ and $\beta$ are operations in a single thread of execution, and $\alpha$ happens before $\beta$; $ii)$ $\exists x, X$ such that $\alpha$ creates version $X$ of key $x$, and $\beta$ reads $X$; $iii)$ $\exists \gamma$ such that $\alpha\leadsto \gamma$ and $\gamma\leadsto \beta$. If $\alpha$ and $\beta$ are two PUTs with values $X$ and $Y$ respectively, then (with a slight abuse of notation) we also say $Y$ causally depends on $X$, and we write $X\leadsto Y$. 

A \emph{causally consistent} key value store respects the causality order. Intuitively, if a client reads $Y$ and $X \leadsto Y$, then any subsequent read performed by the client on $x$ returns either $X$ or a newer version. I.e., the client cannot read $X' : X' \leadsto X$. A ROT operation returns item versions from a \emph{causally consistent snapshot}~\cite{Mattern89,Lloyd:2011}: if a ROT returns $X$ and $Y$ such that $X\leadsto Y$, then there is no $X'$ such that $X\leadsto X'\leadsto Y$. 

To circumvent trivial implementations of causal consistency, we require that a value, once written, becomes \emph{eventually visible}, meaning that it is available to be read by all clients after some finite time~\cite{Bailis:2013}.

Causal consistency does not establish an order among concurrent (i.e., not causally related) updates on the same key. Hence, different replicas of the same key might diverge and expose different values~\cite{Vogels:2009}. We consider a system that eventually converges: if there are no further updates, eventually all replicas of any key take the same value, for instance using the last-writer-wins rule~\cite{Thomas:1979}.

Hereafter, when we use the term causal consistency, eventual visibility and convergence are implied.

\subsection{Partitioning and Replication}
\label{sec:model:model}

We target key value stores where the data set is split into $N>1$ partitions. Each key is deterministically assigned to a partition by a hash function. A PUT(x, X) is sent to the partition that stores x. For a ROT(x, ...) a read request is sent to all partitions that store keys in the specified key set.

Each partition is replicated at $M \geq 1$ DCs. Our theoretical and  practical results hold for both single and replicated DCs. In the case of replication, we consider a multi-master design, i.e., all replicas of a key accept PUT operations.

\section{Challenges in Implementing\\Read-only Transactions}
\label{sec:sys}

\begin{figure}[t!]
\centering
        \includegraphics[scale=.25]{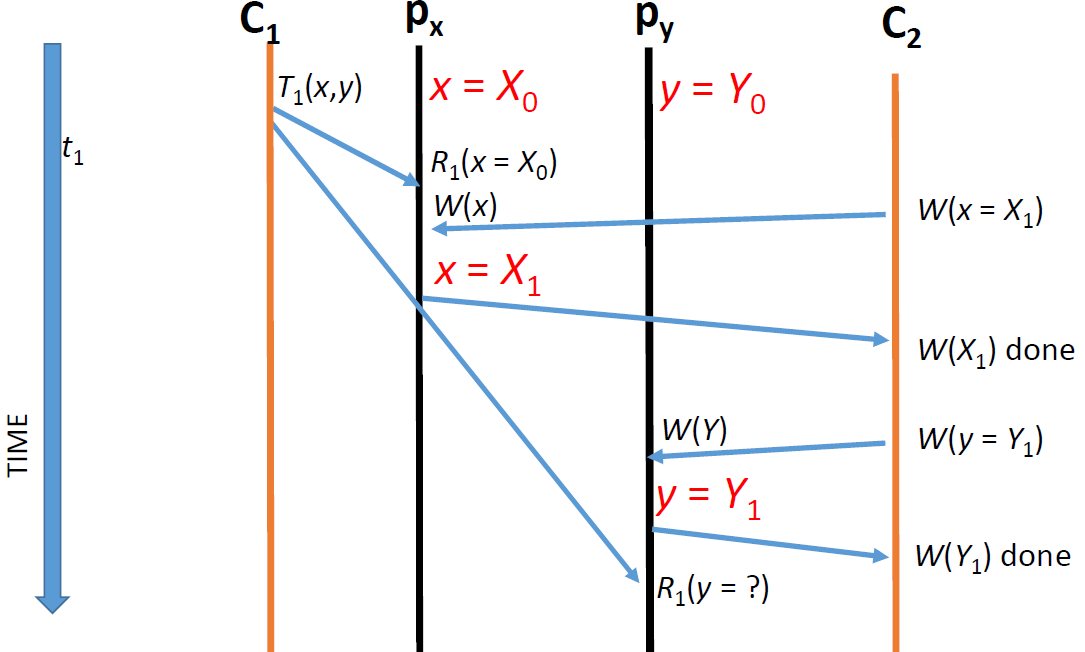}   
        \caption{Challenges in implementing CC ROTs. $C_1$ issues $ROT(x,y)$.  If $T_1$ returns $X_0$ to $C_1$, then $T_1$ cannot return $Y_1$ because there is $X_1$ such that $X_0\leadsto X_1\leadsto Y_1$.}
         \label{fig:sys:cc}
\end{figure}

\noindent{\bf Single DC case.} Even in a single DC, partitions involved in a ROT cannot simply return the most recent version of a requested key if one wants to ensure that a ROT observes a {\color{black}causally consistent snapshot}.  Consider the scenario of Figure~\ref{fig:sys:cc}, with two keys $x$ and $y$, with initial values $X_0$ and $Y_0$, and residing on partitions $p_x$ and $p_y$, respectively. Client $C_1$ performs a ROT on keys $x$ and $y$, and client $C_2$ performs a PUT on $x$ with value $X_1$ and later a PUT on $y$ with value $Y_1$. By asynchrony, the read on x by $C_1$ arrives at $p_x$ before the PUT by $C_2$ on $x$, and the read by $C_1$ on y arrives at $p_y$ after the PUT by $C_2$ on y. Clearly, $p_y$ cannot return $Y_1$ to $C_1$, because a snapshot consisting of $X_0$ and $Y_1$, with $X_0 \leadsto X_1 \leadsto Y_1$ violates the causal consistency property for snapshots (see Section~\ref{sec:model:cc}).

COPS~\cite{Lloyd:2011} presented the first solution to this problem. It encodes causality as direct dependencies of the form ``version $Y$ of $y$ depends on version $X$ of $x$'', stored with $Y$, and ``client $C$ has established a dependency on version $X$ of $x$'', stored with $C$. These dependencies are passed around as necessary to maintain causality. 
COPS solves the aforementioned challenge as follows.
when $C_1$ performs its ROT, in the first round of the protocol, $p_x$ and $p_y$ return the most recent version of $x$ and $y$, $X_0$ and $Y_1$. Partition $p_y$ also returns to $C_1$ the dependency ``$Y_1$ depends on $X_1$''. From this piece of information $C_1$ can determine that $X_0$ and $Y_1$ do not form a {\color{black}causally consistent snapshot}. Thus, in the second round of the protocol, $C_1$ requests from $p_x$ a more recent version of $x$ to have a causally consistent snapshot,
in this case $X_1$. This protocol is nonblocking, but requires (potentially) two rounds of communication and two versions of key(s) being communicated. Eiger~\cite{Lloyd:2013} improves on this design by using less meta-data, but maintains the potentially two-round, two-version implementation of ROTs.

In later designs for CC systems~\cite{Du:2013,Du:2014,Akkoorath:2016}, direct dependencies were abandoned in favor of timestamps, which provide a more compact and efficient encoding of causality. To maintain causality, a timestamp is associated with every version of every data item. Each client and each partition also maintain the highest timestamp they have observed. When performing a PUT, a client sends along its timestamp. The timestamp of the newly created version is then one plus the maximum between the client's timestamp and the partition's timestamp, thus encoding causality. After completing a PUT, the partition replies to the client with this new version's timestamp. To implement ROTs it then suffices to pick a timestamp for the snapshot, and send it with the ROTs to the partitions. A partition first makes sure that its timestamp has caught up to the snapshot timestamp \cite{Akkoorath:2016}. This ensures that later a version cannot be created with a lower timestamp than the snapshot timestamp. Then, the partition returns the most recent key values with a timestamp smaller than or equal to the snapshot timestamp. 

The snapshot timestamp is 
 picked by a transaction {\em coordinator}~\cite{Du:2014,Akkoorath:2016}. Any server can be the coordinator of a ROT. Thus, the client contacts the coordinator, the coordinator picks the timestamp, and the client or the coordinator then sends this timestamp along with the keys to be read to the partitions. The client provides the coordinator with the last observed timestamp, and the coordinator picks the transaction timestamp as the maximum of the client's timestamp and its own.  Observe that, in general, the client cannot pick the snapshot timestamp itself, because the timestamp may be arbitrarily far behind, compromising eventual visibility. 

Timestamps may be generated by logical or by physical clocks. Returning to our example of Figure 1, assume that the logical clocks at $C_1$ and $C_2$ are initially 0, the logical clocks at $p_x$ and $p_y$ are initially 90. the timestamps of $X_0$ and $Y_0$ are 70, and that a transaction coordinator chooses a snapshot timestamp 100. When receiving the read of $C_1$ with snapshot timestamp 100, $p_x$ advances its logical clock to 100, and returns $X_0$. When $p_x$ receives PUT($x, X_1$), it creates $X_1$ with timestamp 101, and returns that value to $C_2$. $C_2$ then sends the PUT($y, Y_1$) to $p_y$ with timestamp 101, and $Y_1$ is created with timestamp 102. When the read of $C_1$ on $y$ arrives with snapshot timestamp 100, $p_y$ uses the timestamps of $Y_0$ and $Y_1$ to conclude that it needs to return $Y_0$, the most recent version with timestamp smaller than or equal to 100. As with COPS, this protocol is nonblocking; unlike COPS, it requires only a single version of each key, but it always requires two rounds of communication \cite{Akkoorath:2016}.

\begin{figure}[t!]
\centering
        \includegraphics[scale=.23]{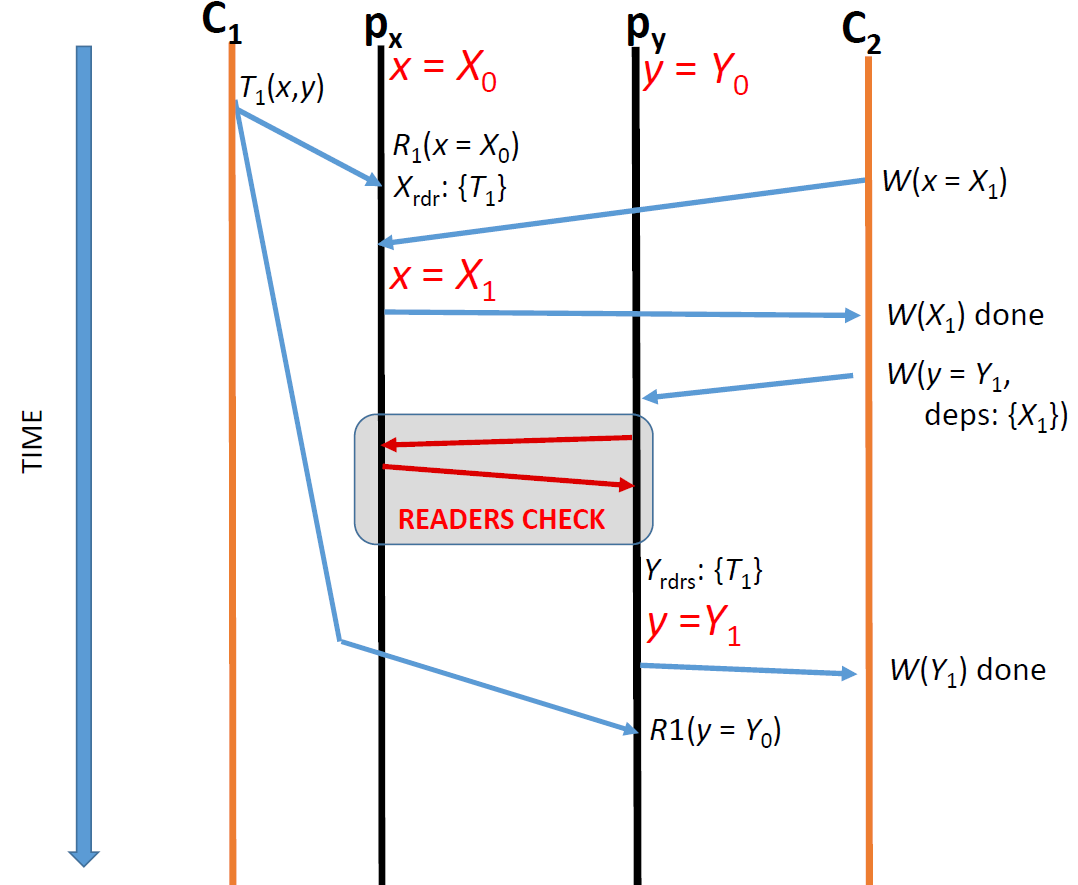}   
        \caption{COPS-SNOW design. 
        $C_2$ declares that $Y_1$ depends on $X_0$. Before making $Y_1$ visible, $p_y$  runs a ``readers check'' with $p_x$ and is informed that $T_1$ has observed a snapshot that does not include $Y_1$.}
         \label{fig:sys:cops}
\end{figure}

A further complication arises when (loosely synchronized) physical clocks are used for timestamping~\cite{Du:2014,Akkoorath:2016}, since physical clocks, unlike logical clocks, can only move forward with the passage of time. As a result, in our example, when the read on $p_x$ arrives with snapshot timestamp 100, $p_x$ has to wait until its physical clock advances to 100 before it can return $X_0$. This makes the protocol blocking, in addition to being one-version and two-round.

\begin{figure*}[t!]
\begin{subfigure}[h]{0.5\textwidth}
		\centering
       \includegraphics[scale=0.25]{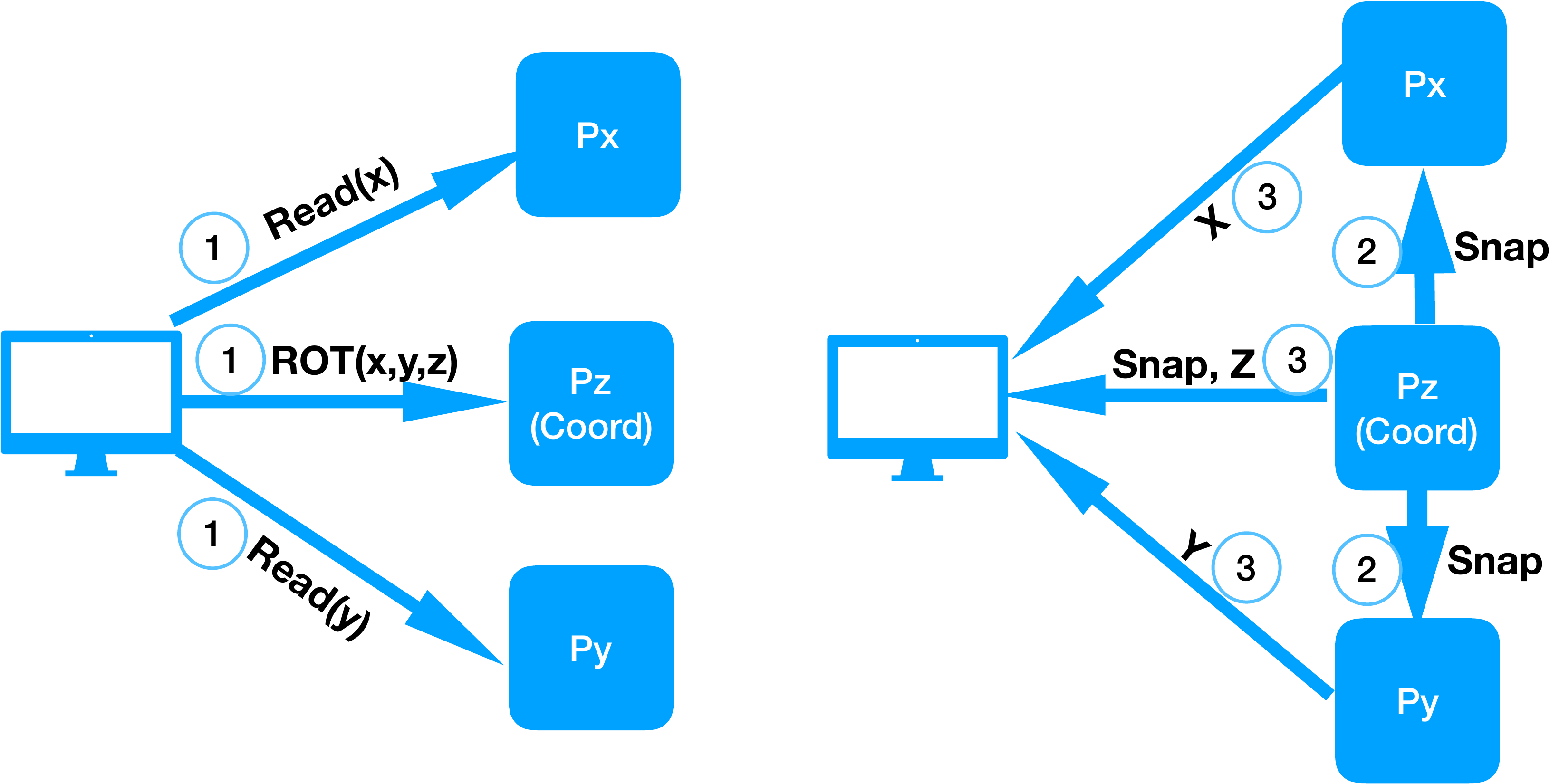}
        \caption{1 1/2 rounds (3 communication steps).}
        \label{fig:contrarian:3h}
    \end{subfigure}
   \hfill 
    \begin{subfigure}[h]{0.5\textwidth}
    \centering
       \includegraphics[scale=0.25]{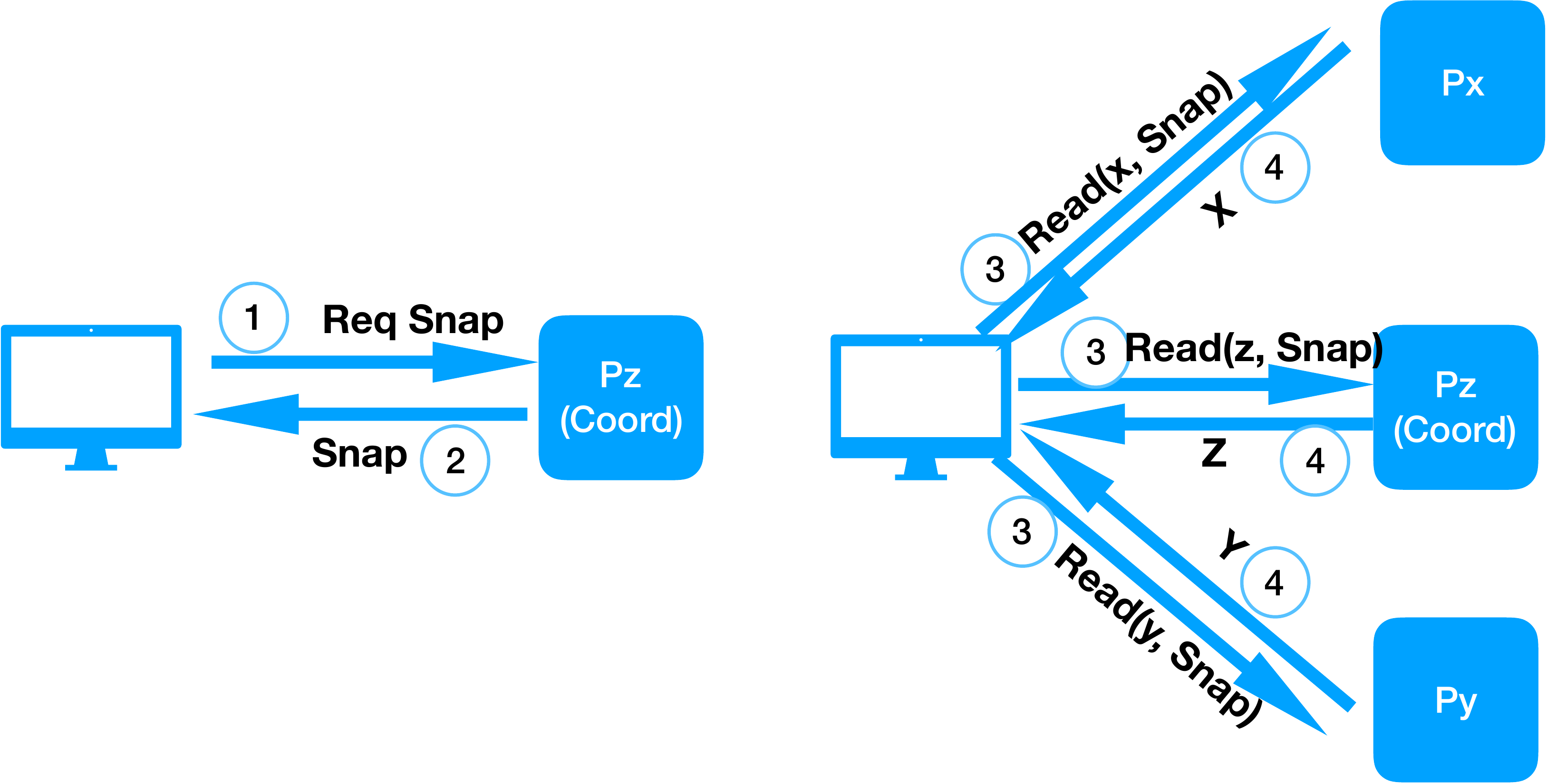}
        \caption{2 rounds (4 communication steps).}
        \label{fig:contrarian:4h}
    \end{subfigure}
\caption{ROT implementation in Contrarian. Numbered circles depict the order of operations. The client always piggybacks on its requests the last snapshot it has seen (not shown), so as to observe monotonically increasing snapshots. Any node involved in a ROT can act as the coordinator of the ROT. Using 1 1/2  rounds reduces the number of communication hops with respect to 2 rounds, at the expenses of more messages exchanged to run a ROT.}\label{fig:contrarian}
\end{figure*}

The question then becomes: does there exist a single-round, single-version, nonblocking protocol for CC ROTs? This question was answered in the affirmative by a follow-up to the COPS and Eiger systems, called COPS-SNOW~\cite{Lu:2016}. Using again the previous example, we depict in Figure~\ref{fig:sys:cops} how the COPS-SNOW protocol works at a high level. Each ROT is given a unique identifier. When a ROT $T_1$ reads $X_0$, $p_x$ records $T_1$ as a reader of $x$. It also records the (logical) time at which the read occurred. On a later PUT on $x$, $T_1$ is added to the ``old readers of $x$'', the set of transactions that have read a version of $x$ that is no longer the most recent version, again together with the logical time at which the read occurred.

When $C_2$ later sends its PUT on $y$ to $p_y$, it includes (as in COPS) that this PUT is dependent on $X_1$. Partition $p_y$ now interrogates $p_x$ as to whether there are old readers of $x$, and, if so, records the old readers of $x$ into the old reader record of $y$, 
together with their logical time. When later the read of $T_1$ on $y$ arrives, $p_y$ finds $T_1$ in the old reader record of $y$. $p_y$ therefore knows that it cannot return $Y_1$. Using the logical time in the old reader record, it returns the most recent version of $y$ before that time, in this case $Y_0$. In the rest of the paper, we refer to this procedure as the {\em readers check}. This protocol is one-round, one-version and nonblocking, and therefore termed {\em latency-optimal}.

This protocol, however, incurs a very high cost on PUTs. We demonstrate this cost by slightly modifying our example. Let us assume that hundreds of ROTs read $X_0$ before the PUT($x, X_1$) (as might well occur with a skewed workload in which x is a hot key). Then all these transactions must be stored as readers and then as old readers of $x$, communicated to $p_y$, and examined by $p_y$ 
on each incoming read from a ROT. Let us further modify the example by assuming that $C_2$ reads other keys from partitions $p_i$ different from $p_x$ and $p_y$ before writing $Y_1$. Because $C_2$ has established a dependency on all the versions it has read, in order to compute the old readers for $y$, $p_y$ needs to interrogate not only $p_x$, but all the other partitions $p_i$.

\pvs
~\\\noindent{\bf Challenges of geo-replication.} Further complications arise in a geo-replicated setting with multiple DCs. We assume that keys are replicated asynchronously, so a new key version may arrive at a DC before its causal dependencies. COPS and COPS-SNOW deal with this situation through a technique called {\em dependency checking}. When a new key version is replicated, its causal dependencies are sent along. Before the new version is installed, the system checks by means of dependency check messages to other partitions that its causal dependencies are present. When its dependencies have been installed in the DC, the new key version can be installed as well. In COPS-SNOW, in addition, the readers check for the new key version proceeds in a remote data center as it does in the data center where the PUT originated. To amortize the overhead, the dependency check and the readers check are performed as a single protocol.

An alternative technique, commonly used with timestamp-based methods, is to use a {\em stabilization protocol}~\cite{Babaoglu:1993,Du:2014,Akkoorath:2016}. Variations exist, but in general each data center establishes a cutoff timestamp below which it has received all remote updates. Updates with a timestamp lower than this cutoff can then be installed. Stabilization protocols are cheaper to implement than dependency checking~\cite{Du:2014}, but they lead to a complication in making ROTs nonblocking, in that one needs to make sure that the snapshot timestamp assigned to a ROT is below the cutoff timestamp, so that there is no blocking upon reading.

\section{Contrarian: An efficient \\but not latency-optimal design}
\label{sec:sys:contrarian}

We now present Contrarian, a protocol that implements almost all the properties of latency-optimal ROTs, without incurring the overhead that stems from latency-optimal ROTs, thereby providing low latency, resource efficiency and high throughput.

Our goal is not to propose a radically new design of CC. Rather, we aim to show how an existing and widely employed non-latency optimal design can be improved to achieve {\em almost all} the desirable properties of latency optimality without incurring the overhead that inherently results from achieving {\em all} of them (as we demonstrate in Section~\ref{sec:theory}). 

Contrarian builds on the aforementioned coordinator-based design of ROTs and on the stabilization protocol-based approach (to determine visibility of remote items ) in the geo-replica- ted setting.  These characteristics, all or in part, lie at the core of many state-of-the-art systems, like Orbe~\cite{Du:2013}, GentleRain~\cite{Du:2014}, Cure~\cite{Akkoorath:2016} and CausalSpartan~\cite{CausalSpartan}. The improvements we propose in Contrarian, thus, can be employed to improve  the design of these and similar systems.

\pvs
~\\\noindent{\bf Properties of ROTs.} 
Contrarian's ROT protocol runs in 1 1/2 rounds, is one-version, and nonblocking. 
While Contrarian sacrifices a half round in latency compared to the theoretically LO protocol,  it retains the low cost of PUTs as in other non-LO designs.

Contrarian implements ROTs in 1 1/2 rounds of communication, 
by one-round trip between the client and the partitions (one of which is chosen as the coordinator) with an extra hop from the coordinator to the partitions.
As shown in Figure~\ref{fig:contrarian}, this design requires only three communication steps instead of four as the classical coordinator-based approach described in Section 3. 
Contrarian reduces the communication hops to improve latency at the expense of generating more messages to serve a ROT with respect to a 2-round approach. 
As we shall see in Section~\ref{sec:eval:design}, this leads to a slight throughput loss. Contrarian can be configured to run ROTs with 2 rounds (even on a per-ROT basis) to maximize throughput.

Contrarian achieves the one-version property because partitions read the freshest version within the snapshot proposed by the coordinator.

Contrarian implements nonblocking ROTs by using logical clocks.  In the single-DC case, logical clocks allow a partition to move its local clock's value to the snapshot timestamp of an incoming ROT, if needed. Hence, ROTs can be served without blocking (as described in Section 3).

We now describe how Contrarian implements geo-replica- tion and retains the nonblocking property in that setting.

\pvs
~\\\noindent{\bf Geo-replication.} Similarly to Cure~\cite{Akkoorath:2016},  Contrarian uses dependency vectors to track causality, and employs a stabilization protocol to determine a cutoff vector $CV$ in a DC (rather than a cutoff timestamp as discussed earlier). Every partition maintains a version vector $VV$ with one entry per DC. $VV[m]$ is the timestamp of the latest version created by the partition, where $m$ is the index of the local DC. $VV[i], i\neq m$, is the timestamp of the latest update received from the replica in the $i-$th DC. A partition sends a heartbeat message with its current clock value to its replicas if it does not process a PUT for a given amount of time.

Periodically, the partitions within $DC_m$ exchange their $VV$s and compute the aggregate minimum vector, called Global Stable Snapshot ($GSS$). The GSS represents a lower bound on the snapshot of remote items that have been installed by {\em every} node in $DC_m$. The GSS is exchanged between clients and partitions upon each operation to update their views of the snapshot installed in $DC_m$.
 
Items track causal dependencies by means of dependency vectors $DV$, with one entry per DC. If $X.DV[i] = t$, then $X$ (potentially) causally depends on all the items originally written in $DC_i$ with a timestamp up to $t$. $DV[s]$, where $s$ is the source replica, is the timestamp of $X$ and it is enforced to be higher than any other entry in $DV$ upon creation of $X$, to reflect causality. The remote entries of the GSS are used to build the remote entries of $DV$ of newly created items. $X$ can be made visible to clients in a remote $DC_r$ if $X.DV$ is entry-wise smaller than or equal to the GSS on the server that handles $x$ in $DC_r$. This condition implies that all $X'$s dependencies have already been received in $DC_r$. 

The ROT protocol uses a vector $SV$ to encode a snapshot. The local entry of $SV$ is the maximum between the clock at the coordinator and the highest local timestamp seen by the client. The remote entries of $SV$ are given by the maximum between the $GSS$ at the coordinator and the highest $GSS$ seen by the client. An item $Y$ belongs to the snapshot encoded by $SV$ if $Y.DV\leq SV$. This protocol is nonblocking because $i)$ partitions can move the value of their local clock forward to match the local entry of $SV$ and $ii)$ the remote entries of $SV$ correspond to a causally consistent snapshot of remote items that have already been received in the DC.

\pvs
~\\\noindent{\bf Freshness of the snapshots.} The $GSS$ is computed by means of the minimum operator. Because logical clocks on different nodes may advance at different paces, a single laggard node in one DC can keep entries in the $GSS$ from progressing, thus increasing the staleness of the snapshot.  A solution to this problem is to use loosely synchronized physical clocks~\cite{Du:2013,Du:2014,Akkoorath:2016}. However, physical clocks cannot be moved forward to match the timestamp of an incoming ROT, which can jeopardize the nonblocking property~\cite{Akkoorath:2016}. 

To achieve fresh snapshots and preserve nonblocking ROTs, Contrarian uses Hybrid Logical Physical Clocks (HLC)~\cite{Kulkarni:2014}. In brief, an HLC is a logical clock that generates timestamps by taking the maximum between the local physical clock and the highest timestamp seen by the node plus one. On the one hand, HLCs behave like logical clocks, so a server can move its clock forward to match the timestamp of an incoming ROT request, thereby preserving the nonblocking behavior of ROTs. On the other hand, HLCs behave like physical clocks, because they advance even in absence of events and inherit the (loose) synchronicity of the underlying physical clock. Hence, the stabilization protocol identifies fresh snapshots. Importantly, the correctness of Contrarian does not depend on the synchronization of the clocks, and Contrarian preserves its properties even if using plain logical clocks.

Contrarian is not the first CC system that proposes the use of HLCs to generate event timestamps. However, existing systems use HLCs either to avoid blocking PUT operations~\cite{CausalSpartan}, or reduce replication delays~\cite{Eunomia:2017}, or improve the clock synchronization among servers~\cite{Mehdi:2017}. Here, we show how HLCs can be used to implement nonblocking ROTs.

\section{Experimental Study}
\label{sec:eval}

\begin{table*}[t!]
\centering
\scriptsize
\begin{tabular}{|c|c|c|c|}
\hline
{\bf Parameter }                          & {\bf Definition}                                                         & {\bf Value}     & {\bf Motivation}                                                                                      \\ \hline
\multirow{3}{*}{Write/read ratio ({\bf w})  } & \multirow{3}{*}{\#PUTS/(\#PUTs+\#individual reads)} & 0.01               & Extremely read-heavy workload                                                       \\ \cline{3-4} 
                                    &                                                                      & {\bf 0.05}             & Default read-heavy parameter in YCSB~\cite{Cooper:2010}                                                             \\ \cline{3-4} \cline{3-4} 
                                    &                                                                      & 0.1           & Default parameter in COPS-SNOW~\cite{Lu:2016}                                                                         
                                     \\ \hline\hline
Size of a ROT ({\bf p})                   & \# Partitions involved in a ROT     & {\bf 4},8,24     & Application operations span multiple partitions~\cite{Nishtala:2013} \\ \hline\hline
\multirow{3}{*}{Size of values ({\bf b})} & \multirow{3}{*}{Value size (in bytes). Keys take 8 bytes.} & {\bf 8}               & Representative of many production workloads~\cite{Atikoglu:2012,Nishtala:2013,Reda:2017}                                                           \\ \cline{3-4} 
                                    &                                                                      & 128             & Default parameter in COPS-SNOW~\cite{Lu:2016}                                                             \\ \cline{3-4} \cline{3-4} 
                                    &                                                                      & 2048            & Representative of workloads with large items                                                                          
                                     \\ \hline\hline
                     \multirow{3}{*}{Skew in key popularity ({\bf z})} & \multirow{3}{*}{Parameter of the zipfian distribution.} & {\bf 0.99}              & Strong skew typical of many production workloads~\cite{Atikoglu:2012,Balmau:2017}                                                           \\ \cline{3-4} 
                                    &                                                                      & 0.8             & Moderate skew and default in COPS-SNOW~\cite{Lu:2016}                                                             \\ \cline{3-4} 
                                    &                                                                      & 0            & No skew (uniform distribution)~\cite{Balmau:2017}
                                     \\ \hline
\end{tabular}
\caption{Workload parameters considered in the evaluation. The default values are given in bold.}
\label{tab:wkld}
\end{table*}

\subsection{Summary of the results}
\noindent{\bf Main findings.} We show that the resource demands to perform PUT operations in the latency-optimal design are in practice so high that they not only affect the performance of PUTs, but also the performance of ROTs, even with read-heavy workloads. In particular, with the exception of scenarios corresponding to extremely read-heavy workloads and modest loads, where the two designs are comparable, Contrarian achieves ROT latencies that are lower than a latency-optimal design. In addition, Contrarian achieves higher throughput for almost all workloads.

\pvs
~\\\noindent{\bf Lessons learnt.} In light of our experimental findings, we draw three main conclusions.

\pvs
~\\\noindent{$i)$} Overall system efficiency is key to {\em both} low latency {\em and} high throughput. It is fundamental to understand the cost of optimizing an operation on the system even though the optimized operation dominates the workload. 

\pvs
~\\\noindent{$ii)$} The high-level theoretical model of a design may not capture the resource utilization dynamics incurred by the design. While a theoretical model represents a powerful lens to compare and qualitatively analyze designs, the choice of a target design for a system should rely also on a more quantitative analysis, e.g., by means of analytical modeling~\cite{Tay:2010}.

\pvs
~\\\noindent{$iii)$} Ultimately, the optimality of a design is closely related to the target workload as well as target architecture and available computational resources.

\subsection{Experimental environment.} 
\noindent{\bf Implementation and optimizations.} We implement Contrarian, Cure~\footnote{Cure supports an API that is different from Contrarian's~\cite{Akkoorath:2016}. We modify Cure to comply with the model described in Section~\ref{sec:model}.} and the COPS-SNOW design in the same C++ code-base. Clients and servers use Google Protocol Buffer~\cite{protobuf} for communication. We call CC-LO the system that implements the design of COPS-SNOW. We improve its performance over the original design by more aggressive eviction of transactions from the old reader record. Specifically, we garbage-collect a ROT id after 500 msec from its insertion in the readers list of a key (vs the 5 seconds of the original implementation) and we enforce that each readers-check message response contains at most one ROT id per client, i.e., the one corresponding to the most recent ROT of that client. These two optimizations reduce by one order of magnitude the amount of ROT ids exchanged, leading it to approach the lower bound we describe in Section 6.
We use NTP~\cite{ntp} to synchronize clocks in Contrarian and Cure, and the stabilization protocol is run every 5 msec.

\pvs 
~\\\noindent{\bf Platform.} We use 64 machines equipped with 2x4 AMD Opteron 6212 (16 hardware threads) and 130 GB of RAM and running Ubuntu 16.04 with a 4.4.0-89-generic kernel. We consider a data set sharded across 32 partitions. Each partition is assigned to a different server. We consider a single DC scenario and a replicated scenario with two replicas. Machines communicate over a 10Gbps network. 
 
Using only two replicas is a favorable condition for CC-LO, 
Since the readers check has to be performed also for replicated updates in the remote DCs, the corresponding overhead grows linearly with the number of DCs. 
We also note that the overheads of the designs we consider are largely unaffected by the communication latency between replicas, because update replication is asynchronous and happens in the background. Thus, emulating a multi-DC scenario over a local area network suffices to capture the most relevant performance dynamics that depend on (geo-)replication~\cite{Lu:2016}.

\pvs
~\\\noindent{\bf Methodology.} Experiments run for 90 seconds, and clients issue operations in closed loop. We generate different loads for the system by spawning different numbers of client threads (starting from one thread per client machine).  We have run each experiment up to 5 times, with minimal variations between runs, and report the median result. 

\pvs
~\\\noindent{\bf Workloads.} Table~\ref{tab:wkld} summarizes the workload parameters we consider. We use read-heavy workloads, in which clients issue ROTs and PUTs according to a given w/r ratio (w), defined as \#PUT/(\#PUT + \#READ). A ROT reading $k$ keys counts as $k$ READs. ROTs  span a target number of partitions (p), chosen uniformly at random, and read one key per partition. Keys in a partition are chosen according to a zipfian distribution with a given parameter (z). Every partition stores 1M keys, and items have a constant size (b). 

The default workload we consider uses w = 0.05, i.e., the default value for the read-heavy workload in YCSB~\cite{Cooper:2010}; z = 0.99, which is representative of skewed workloads~\cite{Atikoglu:2012}; p = 4, which corresponds to small ROTs (which exacerbate the extra communication cost  in Contrarian); and b = 8, as many production workloads are dominated by tiny items~\cite{Atikoglu:2012}.
  
\pvs
~\\\noindent{\bf Performance metrics.} We focus our study on the latencies of ROTs because, by design, CC-LO favors ROT latencies over PUTs. 
As an aside, in our experiments CC-LO incurs up to one order of magnitude higher PUT latencies than Contrarian. 
For space constraints, we focus on average latencies. We report the 99-th percentile of latencies for a subset of the experiments. We measure the throughput of the systems as the number of PUTs and ROTs per second.

\begin{figure}[b!]
\centering
\includegraphics[scale=0.55]{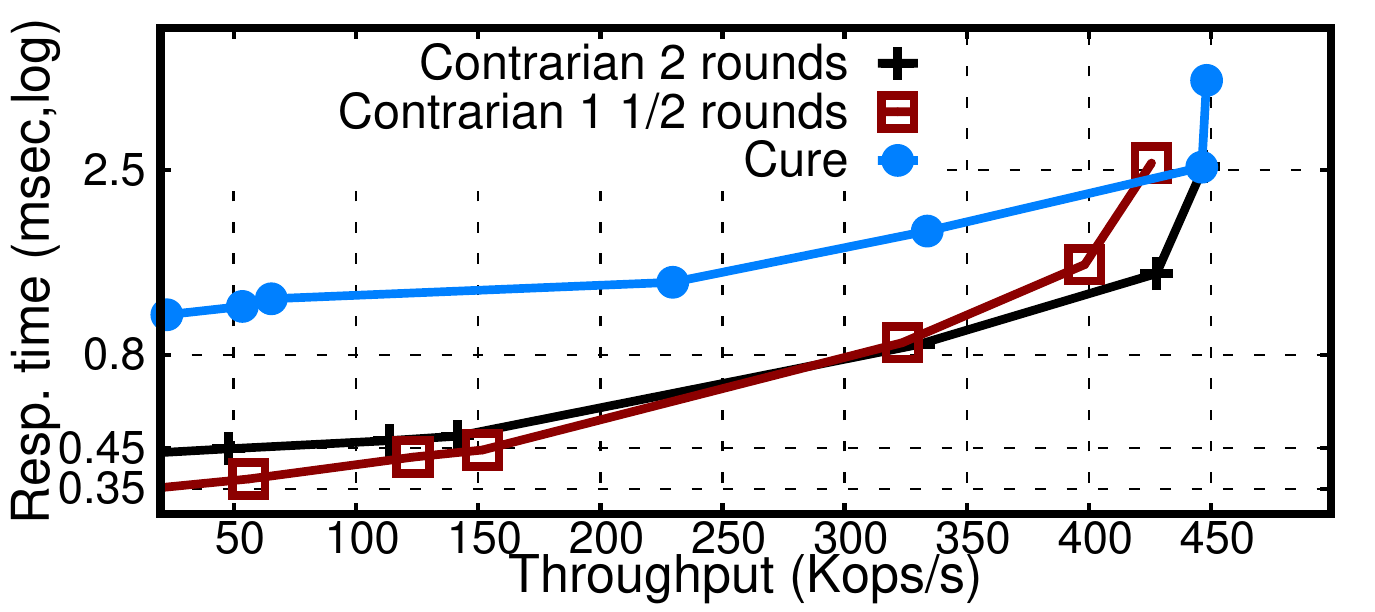}
        \caption{Evaluation of Contrarian's design (2-DC, default workload). Throughput vs average ROT latency (y axis in log scale). Contrarian achieves lower latencies than Cure by means of nonblocking ROTs. Using 1 1/2 rounds of communication reduces latency at low load, but it leads to exchange more messages than using 2 rounds, and hence to a lower maximum throughput (Section~\ref{sec:sys:contrarian}).}\label{fig:eval:contrarian}
\end{figure}

\begin{figure*}[t!]
\begin{subfigure}[h]{0.5\textwidth}
		\centering
       \includegraphics[scale=0.55]{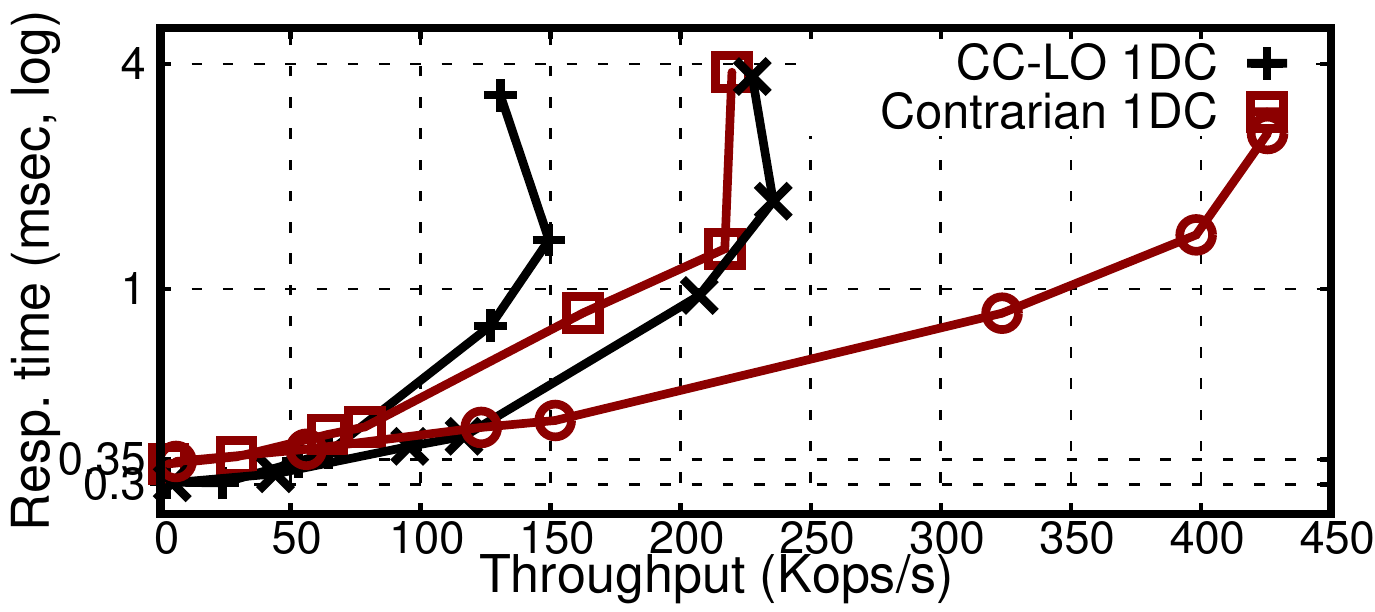}
        \caption{Throughput vs Avg. ROT latency.}
        \label{fig:default:avg}
    \end{subfigure}
    \begin{subfigure}[h]{0.5\textwidth}
    \centering
       \includegraphics[scale=0.55]{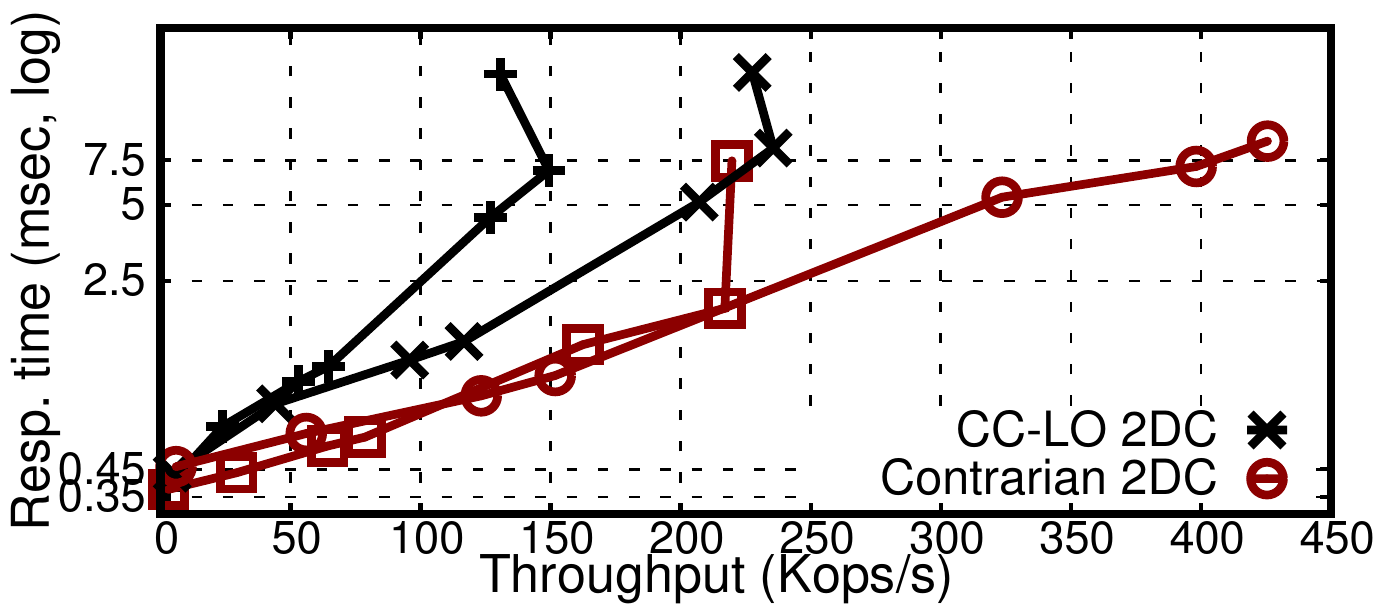}
        \caption{Throughput vs 99-th percentile of ROT latencies.}
        \label{fig:default:99th}
    \end{subfigure}
\caption{ROT latencies (average and 99-th percentile) in Contrarian and CC-LO as a function of the throughput (default workload). The resource contention induced by the extra overhead posed by PUTs in CC-LO affects especially tail latency.}\label{fig:default}
\end{figure*}

\begin{figure}[b!]
\centering
\includegraphics[scale=0.55]{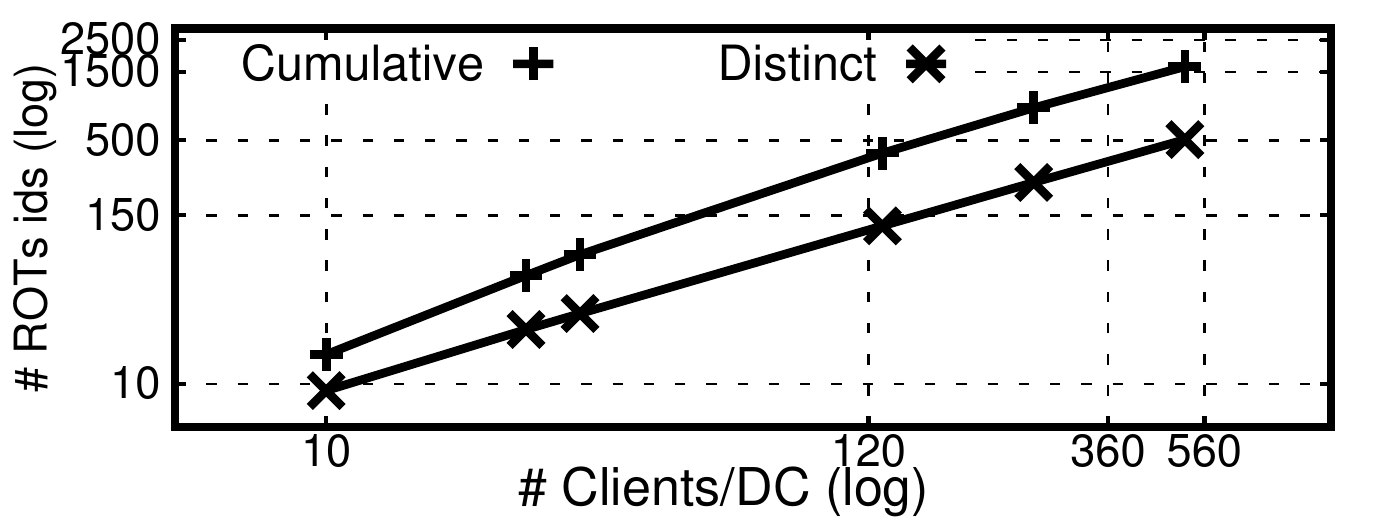}
    \caption{ROT ids collected on average during a readers check in CC-LO (1-DC, default workload). The amount of information exchanged grows linearly with the number of clients, matching the bound stated in Section~\ref{sec:theory}. The average number of servers contacted during a readers check is 12.}
    \label{fig:eval:rdrs}
\end{figure}

\begin{figure*}[t!]
\begin{subfigure}[h]{0.5\textwidth}
		\centering
       \includegraphics[scale=0.55]{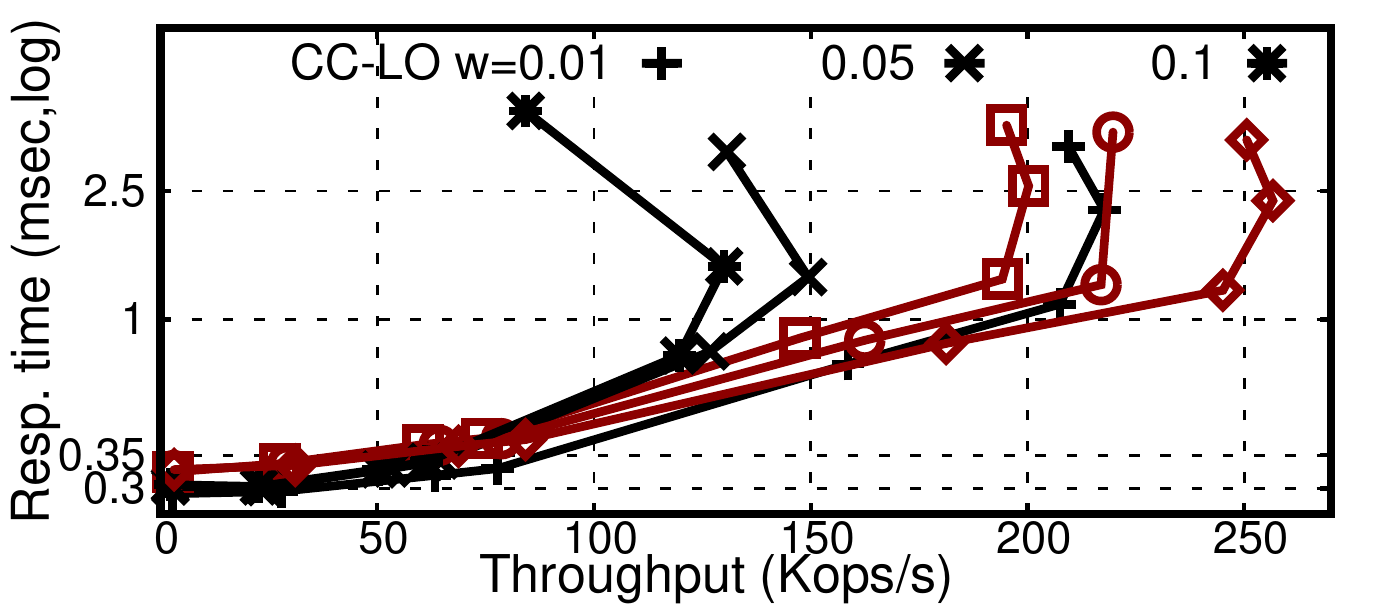}
        \caption{Throughput vs Avg. ROT latency (1 DC).}
        \label{fig:eval:write:1DC}
    \end{subfigure}
\begin{subfigure}[h]{0.5\textwidth}
		\centering
       \includegraphics[scale=0.55]{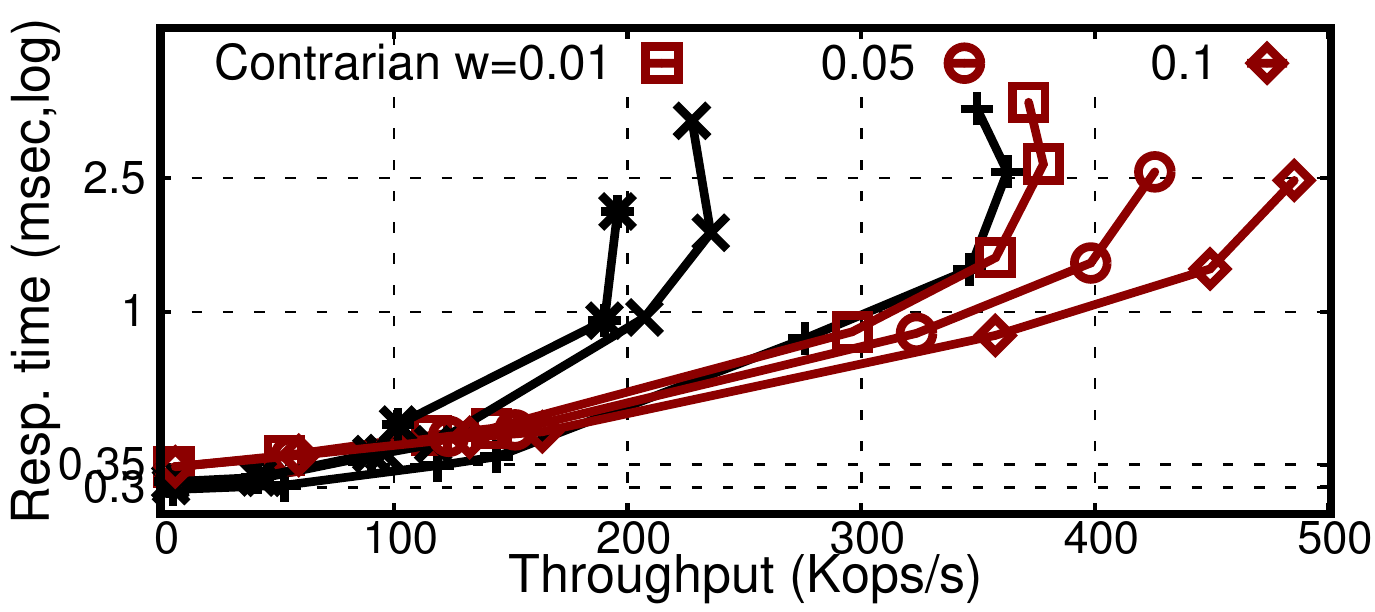}
        \caption{Throughput vs Avg. ROT latency (2 DCs).}
        \label{fig:eval:write:2DC}
    \end{subfigure}    
\caption{Performance with different w/r ratios. Contrarian achieves lower ROT latencies than CC-LO, except at very moderate load and for the most read-heavy workload. Contrarian also consistently achieves higher throughput. Higher write intensities hinder the performance of CC-LO because the readers check is triggered more frequently.}\label{fig:write}
\end{figure*}

\subsection{Contrarian design}
\label{sec:eval:design}
We first evaluate the design of Contrarian, by assessing its improvement over Cure, and by analyzing the behavior of the system when implementing ROTs in 1 1/2 or 2 rounds of communication. 
Figure~\ref{fig:eval:contrarian} compares the three designs given the default workload in 2 DCs.

Contrarian achieves lower latencies than Cure, up to a factor of $\approx$ 3x (0.35 vs 1.0 msec), by implementing nonblocking ROTs. In Cure, the latency of ROTs is affected by clock skew. At low load, the 1 1/2-round version of Contrarian completes ROTs in 0.35 msec vs the 0.45 msec of the 2-round version. The two variants achieve comparable latencies at medium/high load (from 150 to 350 Kops/s). The 2-round version  achieves a higher throughput than the 1 1/2-round version (by 8\% in this case) because it is more resource efficient by requiring fewer messages to run ROTs.

Because we focus on latency more than throughput, hereafter we report results corresponding to the 1 1/2-round version of Contrarian. 

\subsection{Default workload.} 
Figure~\ref{fig:default} reports the performance of Contrarian and CC-LO with the default workload, in the 1-DC and 2-DC scenarios. Figure~\ref{fig:default}(a) reports average latencies, and Figure~\ref{fig:default}(b) reports 99-th percentile. Figure~\ref{fig:eval:rdrs} reports information on the readers check overhead in CC-LO in the single-DC case. 

\pvs
~\\\noindent{\bf Latency.} Figure~\ref{fig:default} (a) shows that Contrarian achieves higher latencies than CC-LO only at very moderate load. Under trivial load conditions ROTs in CC-LO take 0.3 msec on average vs the 0.35 of Contrarian. For the throughput higher than 60 Kops/s in the 1-DC case and than 120 Kops/s in the 2-DC case Contrarian achieves lower latencies than CC-LO. 
These load conditions correspond to roughly 25\%  of Contrarian's peak throughput. That is, CC-LO achieves slightly better latencies than Contrarian only for load conditions 
that correspond to the case where the available resources are severely under-utilized. 

CC-LO achieves worse latencies than Contrarian for nontrivial load conditions because of the overhead caused by the readers check, needed to achieve latency optimality. This overhead induces higher resource utilization, and hence higher contention on physical resources. Ultimately, this leads to higher latencies, even for ROTs.

\pvs
~\\\noindent{\bf Tail latency.} The effect of contention on physical resources is especially visible at the tail of the ROT latencies distribution, as shown in Figure 5 (b). CC-LO achieves lower 99-th percentile latencies only at the lowest load condition (0.35 vs 0.45 msec).

\pvs
~\\\noindent{\bf Throughput.} Contrarian consistently achieves a higher throughput  than CC-LO. Contrarian's maximum throughput is 1.45x CC-LO's in the 1-DC case, and 1.6x in the 2-DC case. In addition, Contrarian achieves a 1.9x throughput improvement when scaling from 1 to 2 DCs. By contrast, CC-LO improves its throughput only by 1.6x. This result is due to the higher replication costs in CC-LO, which has to communicate the dependency list of a replicated update, and perform the readers check in the remote DC.

\pvs
~\\\noindent{\bf Overhead analysis.} To provide a sense of the overhead of the readers check, we present some data collected on the singe-DC platform at the load value at which CC-LO achieves its peak throughput (corresponding to 256 client threads). A readers check targets on average 20 keys, causing the checking partition to contact on average 12 other partitions. A readers check collects on average 252 {\em distinct} ROT ids, which almost matches the number of clients for this experiment. However, the same ROT id can appear in the readers set of multiple keys that have to be checked at different partitions. This increases the {\em cumulative} number of ROT ids exchanged during the readers-check phase, to on average 855 ROT ids for each readers check (71 per contacted node), corresponding roughly to 7KB of data (using 8 bytes per ROT id). Figure~\ref{fig:eval:rdrs} shows that the average overhead of a readers check grows linearly with the number of clients in the system. This result matches our theoretical analysis (see Section~\ref{sec:theory}) and highlights the inherent scalability limitations of latency-optimal ROTs.

\subsection{Effect of write intensity.}
Figure~\ref{fig:write} shows how the performance of the systems is affected by varying the write intensity of the workload.

\begin{figure}[b!]
        \centering
       \includegraphics[scale=0.55]{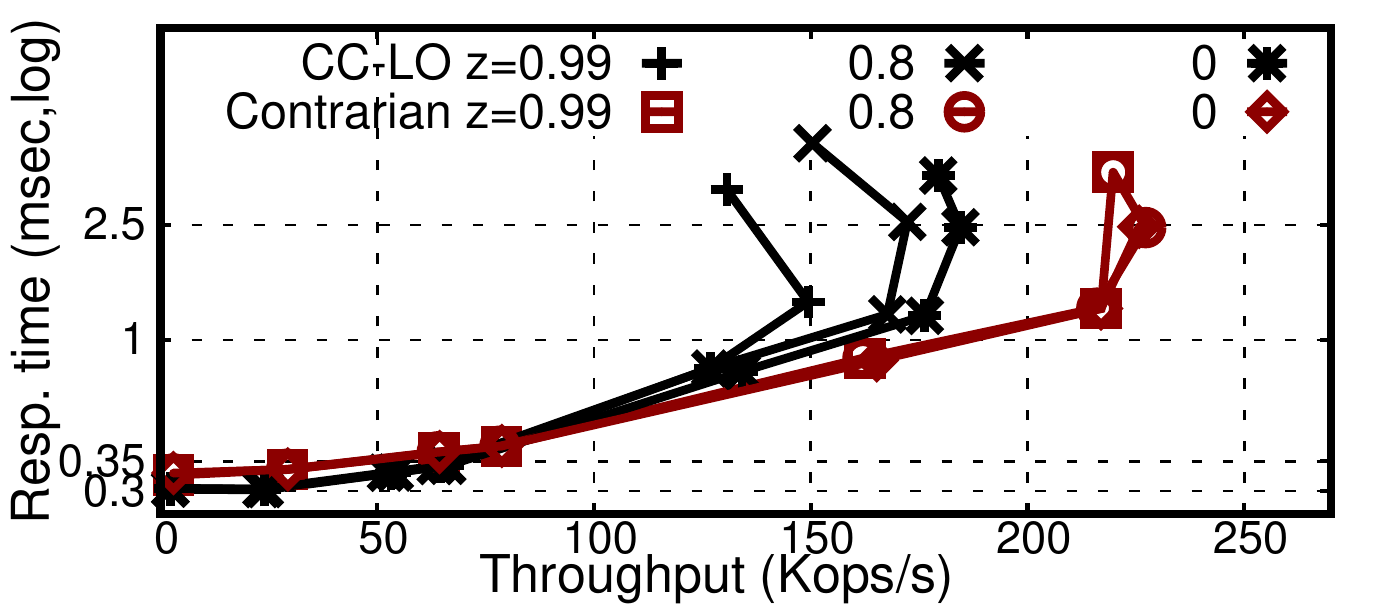}
\caption{Effect of the skew in data popularity (single-DC). Skew hampers the performance of CC-LO, because it leads to long causal dependency chains among operations and thus to much information exchanged during the readers check. 
 }
 \label{fig:zipf}
\end{figure}

\pvs
~\\\noindent{\bf Latency.} Similarly to what is seen previously, for non-trivial load conditions Contrarian achieves lower ROT latencies than CC-LO on both the 1-DC and 2-DC scenarios and with almost all of the write intensity parameters. 
The only exception occurs in the case of the lowest write intensity, and even in this case the differences remain small, especially for the replicated environment.

For w = 0.01 in the single-DC case (Figure~\ref{fig:eval:write:1DC}),  at the lowest load CC-LO achieves an average ROT latency of 0.3 msec vs 0.35 of Contrarian; at high load (200 Kops/s), ROTs in CC-LO completes in 1.11 msec vs 1.33 msec in Contrarian. 
In the 2-DC deployment, however, the latencies achieved by the two systems are practically the same, except for trivial load conditions (Figure~\ref{fig:eval:write:2DC}). 
This change in the relative performances of the two systems is due to the higher replication cost of CC-LO during the readers check, which has to be performed for each update in each DC.

\pvs
~\\\noindent{\bf Throughput.} Contrarian  achieves a higher throughput than CC-LO in almost all scenarios (up to 2.35x for w=0.1 and 2 DCs). The only exception is  the w = 0.01 case in the single DC deployment (where CC-LO achieves a throughput that is 10\% higher). The throughput of Contrarian grows with the write intensity, because PUTs only touch one partition and are faster than ROTs. Instead, higher write intensities hinder the performance of CC-LO, because they cause more frequent execution of the expensive readers check.

\pvs
~\\\noindent{\bf Overhead analysis.} Surprisingly, the latency benefits of CC-LO are not very pronounced, even at the lowest write intensities. This is due to the inherent tension between the frequency of writes and their costs. A low write intensity leads to a low frequency at which readers checks are performed. However, it also means that every write is dependent on many reads, resulting in more costly readers checks.

\subsection{Effect of skew in data popularity.}
Figure~\ref{fig:zipf} depicts how performance varies with skew in data popularity, in the single-DC platform. We focus on this deployment scenario to factor out the replication dynamics of CC-LO and focus on the inherent costs of latency optimality.

\pvs
~\\\noindent{\bf Latency.} Similarly to the previous cases, Contrarian achieves ROT latencies that are lower than CC-LO's for non-trivial load conditions ($> 70$ Kops/s, i.e.,  30\% of Contrarian's maximum throughput).

\pvs
~\\\noindent{\bf Throughput.} The data popularity skew does not sensibly affect Contrarian, whereas it hampers the throughput of CC-LO. The performance of CC-LO degrades because a higher skew causes longer causal dependency chains among operations~\cite{Bailis:2013,Du:2014}, leading to a higher overhead incurred by the readers checks.

\pvs
~\\\noindent{\bf Overhead analysis.} With low skew, a key $x$ is infrequently accessed, so it is likely that many entries in the readers of $x$ can be garbage-collected by the time $x$ is involved in a readers check. With higher skew levels, a few hot keys are accessed most of the time, which leads to the old reader record with many fresh entries. High skew also leads to more duplicates in the ROT ids retrieved from different partitions, because the same ROT id is likely to be present in many the old reader record. Our experimental results (not reported for space constraints) confirm this analysis. They also show that, at any skew level, the number of ROT ids exchanged during a readers check grows linearly with the number of clients (which matches our later theoretical analysis). 

\begin{figure}[b!]
    \centering
       \includegraphics[scale=0.55]{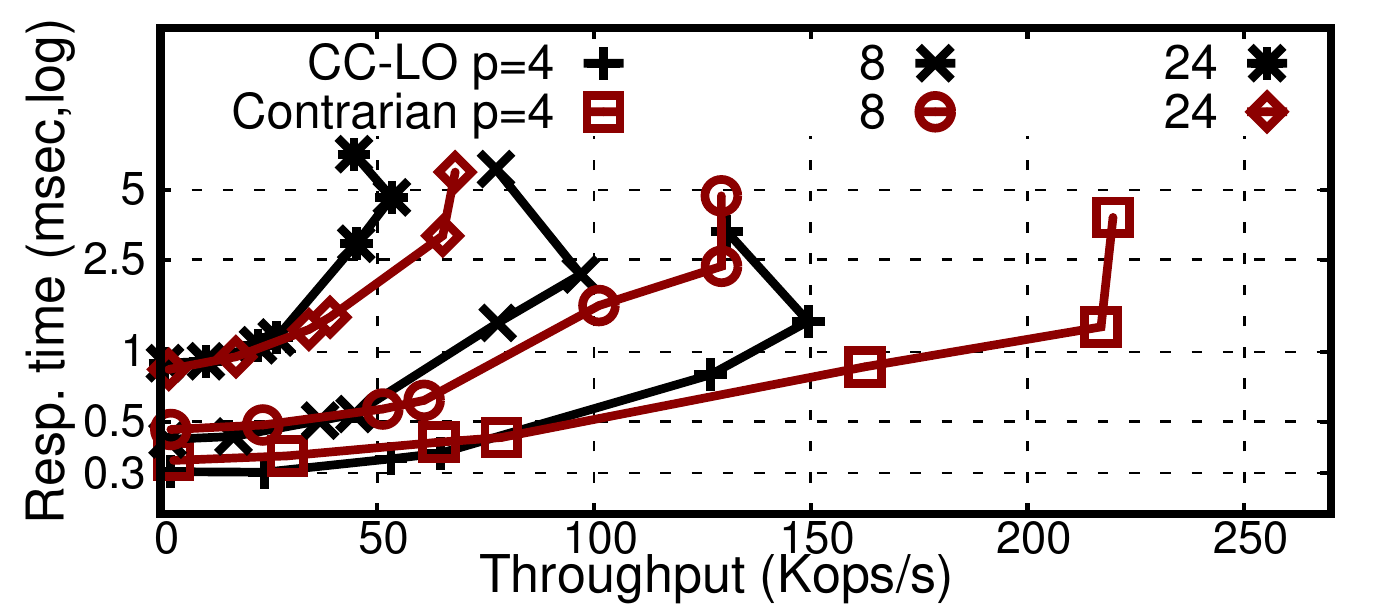}
\caption{Effect of ROT sizes (single-DC). The latency advantage of CC-LO at low load decreases as p grows, because contacting more partitions amortizes the cost of the extra communication needed by Contrarian to execute a ROT.}\label{fig:p}
\end{figure}
\subsection{Effect of size of transactions.} 
Figure~\ref{fig:p} shows the performance of the systems while varying the number of partitions involved in a ROT. We again report results corresponding to the single-DC platform.

\pvs
~\\\noindent{\bf Latency.} Contrarian achieves ROT latencies that are lower than or comparable to CC-LO's for any number of partitions involved in a ROT.  The latency benefits of CC-LO over Contrarian at low load decrease as p grows, because contacting more partitions amortizes the cost of the extra communication needed by Contrarian to execute a ROT.

\pvs
~\\\noindent{\bf Throughput.} Contrarian  achieves higher throughput than CC-LO (up to 1.45x higher, with p=4) for any value of p. 
The throughput gap between the two systems shrinks with p, because of the extra messages that are sent in Contrarian from the coordinator to the other partitions involved in a ROT. The fact that only one key per partition is read in our experiment is an adversarial setting for Contrarian, because it exacerbates the cost of the extra communication hop used to implement ROTs. Such communication cost would be amortized if ROTs read multiple items per partition. Contrarian can be configured to resort to the 2-round ROT implementation when contacting a large number of partitions, to increase resource efficiency. We are currently testing this optimization. 
 
\subsection{Effect of size of values.}  Larger items naturally result in higher CPU and network costs for marshalling, unmarshalling and transmission operations.  As a result, the performance gap between the systems shrinks as the size of the item values increases. Even in the  case corresponding to large items, however, Contrarian achieves ROT latencies lower than or comparable to the ones achieved by CC-LO, and a 43\% higher throughput (in the single-DC scenario). We omit plots and additional details for space constraints.

\section{Theoretical Results}
\label{sec:theory}
Our experimental study shows that the state-of-the-art CC design for LO ROTs delivers sub-optimal performance, caused by the overhead (imposed on PUTs) for dealing with old readers. One can, however, conceive of alternative implementations. 
For instance, rather than storing old readers with the data items in the partitions, 
one could contemplate an implementation which stores old readers at the client which does a PUT and forwards this piece of information to other partitions when doing next PUTs.
Albeit in a different manner, this implementation still communicates the old readers to the partition where a PUT is performed.
One may then wonder: is there an implementation that avoids this overhead altogether in order not to exhibit the performance issues we have seen with CC-LO in Section~\ref{sec:eval}?

We now address this question. We show that the extra overhead on PUTs is {\em inherent} to LO by Theorem 1. 
Furthermore, we show that the extra overhead grows with the number of clients, implying the growth with the number of ROTs and echoing the measurement results we have reported in Section~\ref{sec:eval}.
Our proof is by contradiction and consists of three steps. First, we construct a set $\mathcal{E}$ of at least two executions in each of which, different clients issue the same ROT on keys $x,y$ and then causally related PUTs on $x, y$ occur. Our assumption for contradiction is as follows: in our construction, although different clients issue the same ROT, the communication between servers remains the same. (In other words, roughly speaking, servers do not notice all clients that issue the ROT.) Then based on our assumption, we are able to construct another execution $E^*$ in which some clients issue the ROT while causally related PUTs on $x,y$ occur. Finally, still based on our assumption, we show that in $E^*$, although the ROT is in parallel with the causally related PUTs, no server is able to tell so and then the ROT returns a causally inconsistent snapshot. This completes our proof: (roughly speaking) servers must communicate all clients that issue a ROT and the worst-case communication is then linear in the number of clients.

Our theorem applies to the system model described in Section~\ref{sec:model}.
Below we start with an elaboration of our system model (Section~\ref{sec:theory:model}) and the definition of LO (Section~\ref{sec:theory:lo}). 
Then we present and prove our theorem (Section~\ref{sec:theory:theorem}).

\subsection{System Model}
\label{sec:theory:model}
For the ease of definitions (as well as proofs), we assume the existence of an accurate real-time clock to which no partition or client has access. When we mention time, we refer to this clock. Furthermore, when we say that two client operations are concurrent, we mean that the duration of the two operations overlap according to this clock.

Among other things, this clock allows us to give a precise definition of eventual visibility.
If PUT$(x,X)$ starts at time $T$ (and eventually ends), then there exists finite time $\tau_{X} \geq T$ such that any ROT that reads $x$ and is issued at time $t \geq \tau_X$ returns either $X$ or some $X'$ of which PUT$(x,X')$ starts no earlier than $T$; we say $X$ is \emph{visible} since $\tau_X$.

We assume the same APIs as described in Section~\ref{sec:model:api}. Clients and partitions exchange messages of which delays are finite, but can be unbounded. Clients and partitions can use their local clocks; however clock drift can be arbitrarily large and infinite (so for some time moment $T$, some clock can never reach $T$).
To capture the design of CC-LO, we also assume that an idle client sends no message to any partition; when
performing an operation on some keys, a client sends messages only to the partitions which store values for these keys; a partition sends messages to client $c$ only when responding to some operation issued by $c$; and clients do not communicate with each other. 
For simplicity, we consider any client issuing a new operation only after its previous operation returns. 
We assume at least two partitions and a potentially growing number of clients.

\subsection{Properties of LO ROTs}
\label{sec:theory:lo}

\begin{figure*}[ht]
    \centering
    \begin{subfigure}{0.3\textwidth}
       \includegraphics[scale=0.25]{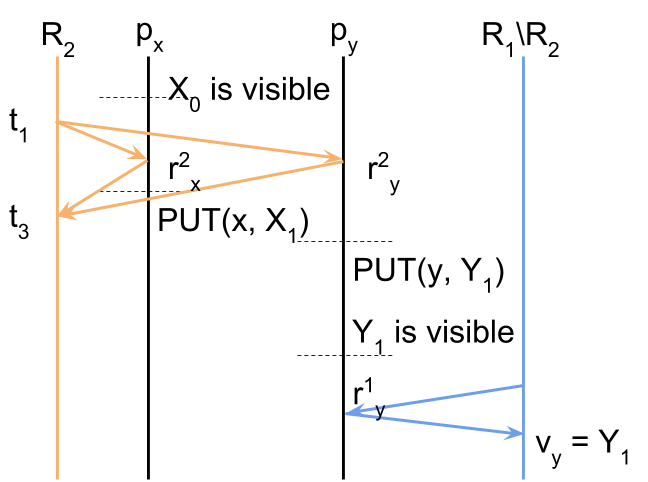}
        \caption{Execution $E_2$}
        \label{fig:e2}
    \end{subfigure}
    \hspace{3cm}
    \begin{subfigure}{0.3\textwidth}
       \includegraphics[scale=0.25]{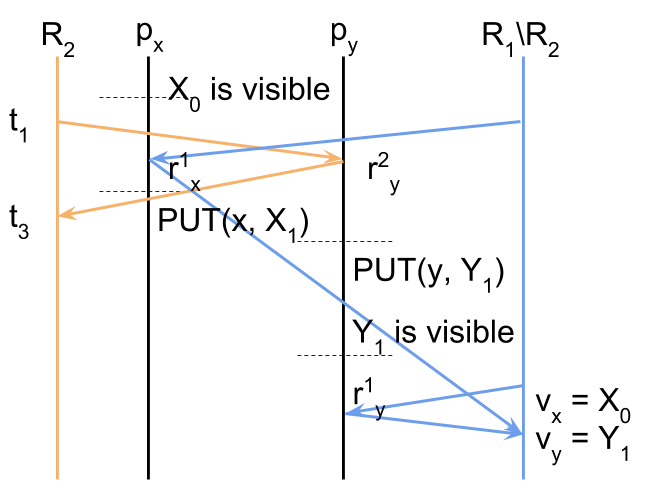}
        \caption{Execution $E^*$ (with $r_x^2$ omitted)}
        \label{fig:e2star}
    \end{subfigure}
\caption{Two (in)distinguishable executions in the proof of Theorem \ref{thm:cost}}\label{fig:pf}
\end{figure*}

We adopt the definition of LO ROTs from~\cite{Lu:2016}, which refers to three properties: \emph{one-round}, \emph{one-version}, and \emph{nonblocking}. 
The \emph{one-round} property states that for every client $c$'s ROT $\alpha$, $c$ sends one message to and receives one message from each partition involved in $\alpha$. The \emph{nonblocking} property states that for any partition $p$ to which $c$ sends a message, $p$ eventually sends one message (the one defined in the one-round property) to $c$, even if $p$ receives no message from a server during $\alpha$. A formal definition of \emph{one-version} property is more involved. 
Basically, for every client $c$'s ROT $\alpha$, we consider the maximum amount of information that may be calculated by any implementation algorithm of $c$ based on the messages which $c$ receives during $\alpha$.\footnote{We consider the amount of information instead of the plaintext as values can be encoded in different ways. For example, if a message contains $X_1$ and $X_1\oplus X_2$ for two values $X_1, X_2$ of the same key, then in the plaintext, there is only one version yet some implementation can calculate two versions from the plaintext. The definition of one-version property excludes such message as well as such implementation.}
The \emph{one-version} property specifies that given the messages which $c$ receives from any (non-empty) subset $Par$ of partitions during $\alpha$, the maximum amount of information contains only one version per key for the keys stored in $Par$.

\subsection{The cost of $LO$}
\label{sec:theory:theorem}

We say a PUT operation $\alpha$ \emph{completes} if $i)$ $\alpha$ returns to the client that issued $\alpha$; and $ii)$ the value written by $\alpha$ becomes visible. 
Our theoretical result  (Theorem \ref{thm:cost}) highlights  that the cost of $LO$ may occur before any \emph{dangerous} PUT completes. 
(We say a PUT operation $\alpha$ is dangerous if $\alpha$ causally depends on some PUT that causally depends on and overwrites a non-$\bot$ value.)

\begin{theorem}[Cost of LO ROTs]
\label{thm:cost}
Achieving $LO$ ROT requires communication, potentially growing linearly with the number of clients, before every dangerous PUT completes.
\end{theorem}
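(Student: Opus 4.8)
The plan is to argue by contradiction, following the three-step outline sketched just before the theorem statement. I would fix two partitions $p_x$ and $p_y$ storing keys $x$ and $y$, and consider a family of executions $\mathcal{E} = \{E_1, E_2, \dots, E_k\}$ parametrized by which client issues a particular ROT on $\{x,y\}$. In execution $E_i$, a distinguished client $c_i$ issues $\mathrm{ROT}(x,y)$; this ROT reads $x$ at $p_x$ (getting the initial non-$\bot$ value $X_0$) well before any write to $x$, and reads $y$ at $p_y$. Separately, in every $E_i$ a client performs $\mathrm{PUT}(x,X_1)$ and then $\mathrm{PUT}(y,Y_1)$ (so $X_0 \leadsto X_1 \leadsto Y_1$ — this is exactly a \emph{dangerous} PUT), but these PUTs have \emph{not completed} (in particular $Y_1$ is not yet visible). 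The scheduling is chosen, as in Figure~\ref{fig:sys:cc}, so that $c_i$'s read on $x$ reaches $p_x$ before $\mathrm{PUT}(x,X_1)$ does, and $c_i$'s read on $y$ reaches $p_y$ after $\mathrm{PUT}(y,Y_1)$ does. By the one-round and nonblocking properties, $p_y$ must reply to $c_i$'s read of $y$ using only information it has locally (plus whatever inter-server messages it received on account of the PUTs), without waiting; by the one-version property it returns a single version of $y$. To respect the causally consistent snapshot requirement (Section~\ref{sec:model:cc}), since $c_i$ read $X_0$ it must \emph{not} return $Y_1$, hence it returns $Y_0$ (or $\bot$, but we can arrange $Y_0 \neq \bot$).

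The contradiction hypothesis is: the total number of bits of inter-server communication caused by the two PUTs (the ``readers check'' analogue) is $o(k)$, i.e. sublinear in the number of clients. The counting/pigeonhole heart of the argument: the messages $p_y$ receives from $p_x$ (and from any other partition) during the PUT phase are the only channel by which $p_y$ can learn \emph{which} client's ROT has already observed a stale $x$. If the aggregate content of these messages is sublinear in $k$, then there exist two distinct indices $i \neq j$ such that the inter-server communication is \emph{identical} in $E_i$ and $E_j$ — more precisely, $p_y$ cannot distinguish, from the messages it receives, the execution in which $c_i$ issued the ROT from the one in which $c_j$ did. This is where I would be careful: I need the executions $E_i$ and $E_j$ to agree on everything \emph{except} the identity of the ROT-issuing client and the messages between that client and $p_x$, so that the only way the discrepancy could reach $p_y$ is through server-to-server messages; the assumption then forces those to coincide.

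Now build the hybrid execution $E^*$ (the ``$E_2^\star$'' of Figure~\ref{fig:e2star}): take $E_i$, but additionally have client $c_j$ \emph{also} issue $\mathrm{ROT}(x,y)$, concurrently, reading $x$ at $p_x$ before $\mathrm{PUT}(x,X_1)$ — and crucially, we can make $c_j$'s read of $x$ happen without $p_x$ sending any additional server-to-server traffic that would distinguish this from $E_i$, because by indistinguishability the server-side trace in $E_j$ (where $c_j$ did exactly this) is the same as in $E_i$. In $E^*$, route $c_j$'s read of $y$ to $p_y$ after $\mathrm{PUT}(y,Y_1)$ has arrived at $p_y$. From $p_y$'s local viewpoint, $E^*$ is indistinguishable from $E_i$: it received the same client messages for the ROT it knows about and the same server messages. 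So $p_y$ responds to $c_j$'s read of $y$ the way it did in $E_i$ — but in $E_i$ the only ROT was $c_i$'s, about which $p_y$ had been (possibly) informed, whereas here $p_y$ has \emph{no} information that $c_j$ read a stale $x$. If $p_y$ returns $Y_0$ to $c_j$, fine for causality but then I instead use the \emph{other} indistinguishability: in the execution $E_j$ (single ROT by $c_j$), $p_y$ must return $Y_0$ to $c_j$ for causal consistency, and by the sublinearity assumption $E_j$ and $E_i$ look the same to $p_y$, forcing $p_y$ to behave the same toward $c_i$ in both — contradiction with the requirement that in $E^*$ (which extends $E_i$) $p_y$ must simultaneously satisfy causality for $c_i$ \emph{and} for $c_j$ while they demand inconsistent things, OR $p_y$ returns $Y_1$ to one of them, violating the causally consistent snapshot. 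Either way we get a causally inconsistent snapshot in $E^*$ or a contradiction with indistinguishability, establishing that the inter-server communication cannot be sublinear in $k$.

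The main obstacle I anticipate is making the indistinguishability argument airtight: I must pin down exactly what ``the communication between servers remains the same'' means (same messages, same contents, same order of receipt at $p_y$), verify that the restrictions in Section~\ref{sec:theory:model} (idle clients send nothing; a partition messages a client only in response to that client's operation; clients don't talk to each other) actually force every bit of cross-client information through server-to-server channels, and confirm that the asynchronous, unbounded-delay model genuinely permits the specific interleaving in each $E_i$ and in $E^*$ (reads racing the PUTs in opposite directions at the two partitions). The one-version and nonblocking properties must be invoked precisely at the step where $p_y$ is forced to commit to a single reply for $c_j$'s read of $y$ without further communication — that is the linchpin that converts ``$p_y$ is uninformed'' into ``$p_y$ returns the wrong version.'' Getting the quantitative statement (\emph{linear}, not merely \emph{nonzero}) requires the pigeonhole to be run over all $k$ clients at once: if $f(k)$ bits suffice, there are at most $2^{f(k)}$ distinct server-message transcripts, so $f(k) \geq \log_2$ of the number of distinguishable cases; I would instead phrase the adversary so that all $k$ ROT-issuing identities must be distinguishable, yielding $f(k) = \Omega(k)$ directly, matching the $\Theta(\#\text{clients})$ readers-check cost observed in Section~\ref{sec:eval}.
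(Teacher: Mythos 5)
Your qualitative argument follows the paper's route: a family of executions differing only in who issues the ROT, an indistinguishability claim for $p_y$ based on identical server-to-server transcripts, and a hybrid execution $E^*$ in which an ``unannounced'' old reader forces $p_y$ to return $Y_1$ alongside $X_0$, violating the causally consistent snapshot property. That part is sound, and you correctly identify the linchpins (the model restrictions that funnel all cross-client information through server channels, and the one-version/nonblocking properties that force $p_y$ to commit to a single reply).

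There is, however, a genuine gap in the quantitative step. You parametrize $\mathcal{E}$ by \emph{which single client} issues the ROT, giving $k$ executions, and then claim that requiring all $k$ transcripts to be pairwise distinct yields $f(k)=\Omega(k)$. It does not: $k$ pairwise-distinct transcripts only force $f(k)\geq\log_2 k$. Your own pigeonhole formula ($f(k)\geq\log_2$ of the number of distinguishable cases) gives a logarithmic, not linear, bound for your family. The paper's construction avoids this by indexing executions by \emph{subsets} $R\subseteq\mathcal{D}$ of clients acting as readers, so that $|\mathcal{E}|=2^{|\mathcal{D}|}$; Lemma~\ref{lma:ineq} then shows any two distinct subsets produce distinct transcripts, and Lemma~\ref{lma:lower-bound} concludes that some execution needs at least $\log_2\bigl(2^{|\mathcal{D}|}\bigr)=|\mathcal{D}|$ bits. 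Intuitively, $p_y$ must be able to learn the entire \emph{set} of potential old readers (one bit per client), not merely the identity of one reader. To repair your proof you would need to redo the construction over all subsets and verify that the hybrid argument still goes through for two arbitrary subsets $R_1\neq R_2$ (taking a client in $R_1\setminus R_2$ as the undetected old reader), which is exactly what the paper does. A secondary, fixable weakness: to force $p_y$ to return $Y_1$ (rather than $Y_0$) to the late reader in $E^*$, you need that read to occur after $\tau_{Y_1}$ and to invoke eventual visibility plus the one-version property; your write-up leaves open the possibility that $p_y$ simply answers $Y_0$, and the fallback argument you sketch for that case is not conclusive as stated.
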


The intuition behind the cost of $LO$ is that a (dangerous) PUT operation, PUT$(y, Y_1)$, eventually completes; however, due to the asynchronous network, a request resulting from a ROT operation $\alpha$ which reads keys ${x,y}$ may arrive after PUT$(y, Y_1)$ completes, regardless of the other request(s) resulting from $\alpha$. Suppose that $\alpha$ has returned value $X_0$ with respect to value $X_1$ such that 
 $X_0\leadsto X_1\leadsto Y_1$, then $\alpha$ can be at risk of breaking causal consistency. As a result, the partition which provides $X_0$ should notify others of the risk and hence the communication. 

Inspired by our intuition, we assume that keys $x$, $y$ belong to different partitions $p_x$ and $p_y$, respectively. 
We call client $c$ an \emph{old reader} of $x$, with respect to PUT$(y, Y_1)$,\footnote{
The definition of an old reader of $x$ here specifies a certain PUT on $y$ and is thus more specific than the definition in CC-LO, an old reader of $x$ in general. The reason to specify a certain PUT is to emphasize the causal relation $X_1\leadsto Y_1$. The proof hereafter takes the more specific definition when mentioning old readers.} if $c$ issues a ROT operation which (1) is concurrent with PUT$(x, X_1)$ and PUT$(y, Y_1)$ and (2) returns $X_0$.
In general, if $c$ issues a ROT operation that reads $x$, then we say $c$ is a reader of $x$. Thus the risk lies in the fact that due to the asynchronous network, any reader can potentially be an old reader.

To have $X_0\leadsto X_1\leadsto Y_1$, for simplicity, we consider a scenario where some client $c_w$ does four PUT operations in the following order: 
PUT$(x, X_0)$, PUT$(y, Y_0)$, PUT$(x, X_1)$ and PUT$(y, Y_1)$, and $c_w$ issues each PUT (except for the first one) after the previous PUT completes.
To prove Theorem \ref{thm:cost}, we consider the worst case: 
all clients except $c_w$ can be readers. We identify similar executions where a different subset of clients are readers. Let $\mathcal{D}$ be the set of all clients except $c_w$. 
We construct the set $\mathcal{E}$ such that each execution has one subset of $\mathcal{D}$ as readers. Hence $\mathcal{E}$ contains $2^{|\mathcal{D}|}$ executions in total. We later show one execution in $\mathcal{E}$ in which the communication carrying readers grows linearly with $|\mathcal{D}|$ and thus prove Theorem \ref{thm:cost}. 

~\\
\noindent\textbf{$2^{|\mathcal{D}|}$ executions $\mathcal{E}$.}
Each execution $E\in\mathcal{E}$ is based on a subset $R$ of $\mathcal{D}$ as readers. 
Every client $c$ in $R$ issues ROT$(\{x, y\})$ at the same time $t_1$. By one-round property, $c$ sends two messages $m_{x, req}$, $m_{y,req}$ to $p_x$ and $p_y$ respectively at $t_1$.
We denote the event that $p_x$ receives $m_{x, req}$ by $r_x$, the event that $p_y$ receives $m_{y, req}$ by $r_y$.
By the nonblocking property, $p_x$ and $p_y$ can be considered to receive messages from $c$ and send messages to $c$ at the same time $t_2$.\footnote{Clearly, $p_x$ and $p_y$ may receive messages at different time, and the proof still holds. The same time $t_2$ is assumed for the simplicity of presentation.} Finally, $c$ receives messages from $p_x$ and $p_y$ at the same time $t_3$.
We order events as follows: $X_0$ and $Y_0$ are visible, $t_1$, $r_x = r_y = t_2$, PUT$(x, X_1)$ is issued, $t_3$, PUT$(y, Y_1)$ is issued. Let $\tau_{Y_1}$ be the time when PUT$(y, Y_1)$ completes. 
For every execution in $\mathcal{E}$, $t_1, t_2, t_3$ take the same values while $\tau_{Y_1}$ actually denotes the maximum value. 

To emphasize the burden on $p_y$, we consider communication that \emph{precedes} a message that $p_y$ receives: we say message $a$ precedes message $b$ if (1) some process $p$ sends $b$ after $p$ receives $a$, or (2) $\exists$ message $c$ such that $a$ precedes $c$ and $c$ precedes $b$.
Clearly, the executions in $\mathcal{E}$ are the same until time $t_1$. Since $t_1$, these executions, especially, the communication between $p_x$ and $p_y$ may change. We construct all executions in $\mathcal{E}$ altogether: 
if at some time point, in one execution, some server sends a message, then we construct all other executions such that the same server sends the same message except that the server is $p_x$, $p_y$ or contaminated by $p_x$ or $p_y$. By contamination, we mean that at some point, $p_x$ or $p_y$ sends message $m$ but we are unable to construct all other executions to do the same; then the message $m$ and server $s$ which receives $m$ are contaminated and $s$ can further contaminate other servers. In our construction, we focus on the non-contaminated messages which are received at the same time across all executions in $\mathcal{E}$. For other messages, if in at least two executions, the same contaminated message $m$ can be sent, then we let $m$ to be received at the same time across these executions; otherwise, We do not restrict the schedule.

We show that the worst-case execution exists in our construction of $\mathcal{E}$. To do so, we first show a property of $\mathcal{E}$; i.e., for any two executions $E_1$, $E_2$ in $\mathcal{E}$ (with different readers), the communication of $p_x$ and $p_y$ must be different, as formalized in Lemma \ref{lma:ineq}.\footnote{Lemma \ref{lma:ineq} abstracts ways of communication between $p_x$ and $p_y$ so that it is independent of certain implementations, 
and covers the following example implementations of communication for old readers as in CC-LO, as the example introduced at the beginning of this section, as well as the following: $p_y$ keeps asking $p_x$ whether a reader of $Y_0$ is a reader of $X_0$ to determine whether all readers of $X_0$ have arrived at $p_y$ (so that there is no old reader with respect to $Y_1$).}

\begin{lemma}[Different readers, different messages]
\label{lma:ineq}
Consider any two executions $E_1, E_2\in\mathcal{E}$.
In $E_i, i\in\{1,2\}$, denote by $M_i$ the messages which $p_x$ or $p_y$ sends to a process other than $\mathcal{D}$ and which precedes some message that $p_y$ receives during $[t_1, \tau_{Y_1}]$ in $E_i$, and denote by $str_i$ the concatenation of ordered messages in $M_i$ ordered by the time when every message is sent. Then $str_1\neq str_2$.
\end{lemma}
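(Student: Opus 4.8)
The proof is by contradiction: suppose $E_1$ and $E_2$ are two executions in $\mathcal{E}$ with reader sets $R_1 \neq R_2$ but $str_1 = str_2$. My plan is to exploit this coincidence to build a ``hybrid'' execution $E^*$ that is indistinguishable from $E_1$ to all servers but from $E_2$ to at least one client, and then derive a violation of causal consistency for that client's ROT. Without loss of generality there is a client $c \in R_1 \setminus R_2$ (swap $E_1, E_2$ if necessary); in $E^*$ this client issues the ROT $\{x,y\}$ exactly as it does in $E_1$, but the rest of the world — in particular the communication pattern among $p_x$, $p_y$ and the other partitions $p_i$ — is scheduled as in $E_2$.

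\textbf{Key steps, in order.} First, I would make precise the ``altogether'' construction of $\mathcal{E}$ recalled just before the lemma: since the only messages whose schedule we do not pin down are those contaminated by $p_x$ or $p_y$, and since $M_1, M_2$ collect exactly the $p_x/p_y$-originated messages sent toward non-$\mathcal{D}$ processes that can precede a message delivered to $p_y$ in $[t_1,\tau_{Y_1}]$, the string $str_i$ is a faithful record of everything the ``server side'' does that can influence $p_y$'s behavior before PUT$(y,Y_1)$ completes. Second, assuming $str_1 = str_2$, I would argue by induction on send-time that the two server-side executions can be aligned: every non-contaminated message is identical by construction, and the contaminated ones are forced to agree precisely because their content is a deterministic function of the (now equal) sequence of inputs each server has received. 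This yields that $p_y$ in $E_2$ behaves, up through $\tau_{Y_1}$, exactly as in $E_1$. Third, I would splice: let $E^*$ agree with $E_2$ on all server-to-server communication and on $c_w$'s four PUTs, but additionally have the client $c \in R_1\setminus R_2$ issue ROT$(\{x,y\})$ at $t_1$ with its two requests $m_{x,req}, m_{y,req}$ delivered at $t_2$ and its replies received at $t_3$, exactly as in $E_1$. By the one-round and nonblocking properties, $c$'s presence only adds the message pair $(m_{x,req}, m_{y,req})$ and the two replies; it does not force $p_x$ or $p_y$ to send anything new to other servers — this is exactly what the hypothesis $str_1 = str_2$ buys us, so $E^*$ is a legitimate execution. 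Fourth, I would determine what $c$ reads in $E^*$: because $p_y$'s state and behavior at the moment $r_y$ occurs match $E_1$'s (where $c$ is also a reader and, by the event ordering fixed in the construction, must return $X_0$ and $Y_0$ to keep the snapshot causally consistent — $X_0 \leadsto X_1 \leadsto Y_1$ forbids returning $Y_1$ alongside $X_0$), $c$ returns $X_0$ for $x$. But the event ordering also places $r_y = t_2$ before PUT$(y,Y_1)$ is issued and $t_3$ before that PUT completes; meanwhile $p_y$'s failure in $E_2$ (hence in $E^*$) to learn of $c$ as an old reader means nothing in $p_y$'s state reflects that $c$ saw $X_0$, so $p_y$ has no basis to withhold $Y_1$ from a \emph{later} read — and I arrange the schedule so that, consistently with the construction, $c$'s read of $y$ returns $Y_1$, giving the snapshot $(X_0, Y_1)$ with $X_0 \leadsto X_1 \leadsto Y_1$, a violation of causal consistency.

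\textbf{Main obstacle.} The delicate point is the second and third steps together: rigorously justifying that $str_1 = str_2$ really does force the entire server-side of $E_2$ to mimic $E_1$, and that adding client $c$ to $E_2$'s world is consistent with $\mathcal{E}$'s construction rules. The subtlety is the \emph{contamination} bookkeeping — I must show that no server other than $p_x, p_y$ (and those they contaminate, whose outgoing messages toward non-$\mathcal{D}$ processes are precisely what $str_i$ records) can ever detect the difference between the $E_1$-world and the $E_2$-world, so that the hybrid $E^*$ does not secretly require some third partition $p_i$ to behave inconsistently. I would handle this by a careful induction over the global event order restricted to the interval $[t_1, \tau_{Y_1}]$, using the system-model restrictions (idle clients send nothing; a partition messages a client only in reply; clients don't talk to each other) to confine all the ``knowledge of $c$'' to $p_x$ and $p_y$, and then observing that $str_1 = str_2$ says exactly that this confined knowledge never leaks out in a way that changes what $p_y$ does before $Y_1$ is visible. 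Once that containment is established, the causal-consistency contradiction in step four is essentially the same argument as the informal scenario of Figure~\ref{fig:sys:cc}.
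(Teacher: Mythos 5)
Your overall strategy---contradiction from $str_1 = str_2$, a hybrid execution built on $E_2$'s server-side schedule with a client $c \in R_1\setminus R_2$ spliced in, and a causal-consistency violation for $c$'s ROT---is the same as the paper's. But the final step, where the contradiction must actually materialize, has a genuine gap. You deliver both of $c$'s requests at $t_2$, ``exactly as in $E_1$.'' In the fixed event ordering of $\mathcal{E}$, $t_2$ precedes the issuing of PUT$(y,Y_1)$, so when $p_y$ receives $c$'s read of $y$ the version $Y_1$ does not yet exist and cannot be returned; the snapshot $(X_0,Y_0)$ is then perfectly causally consistent and no contradiction follows. You seem to sense this (``$p_y$ has no basis to withhold $Y_1$ from a \emph{later} read'') and then simply ``arrange the schedule'' so that $Y_1$ is returned, but that arrangement is incompatible with delivery at $t_2$. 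The paper instead \emph{postpones} $c$'s read of $y$ (the event $r_y^1$) until after $\tau_{Y_1}$, while keeping $c$'s read of $x$ at $t_2$ so that $v_x = X_0$ is forced because $X_1$ has not yet been created; asynchrony is exactly what makes this split of the two request deliveries legitimate.

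Second, even with the read of $y$ delayed, ``no basis to withhold $Y_1$'' is not sufficient: for all you have argued, a correct $p_y$ could conservatively return the old version $Y_0$ to any reader it does not recognize, and $(X_0,Y_0)$ would again be a consistent snapshot. The paper closes this escape route by extending $E_2$ itself past $\tau_{Y_1}$ so that $c$ issues a fresh ROT there whose $y$-request arrives at $p_y$ at the same instant as $r_y^1$ in $E^*$; in that extended $E_2$ eventual visibility forces $v_y = Y_1$, and $p_y$'s indistinguishability between the extended $E_2$ and $E^*$ (which is what $str_1 = str_2$ plus the one-version property buys) transfers $v_y = Y_1$ into $E^*$. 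Both ingredients---the postponed $r_y^1$ and the eventual-visibility anchor in a reference execution---are needed to pin $v_y = Y_1$ and obtain the inconsistent snapshot $(X_0,Y_1)$ with $X_0 \leadsto X_1 \leadsto Y_1$. Your containment/induction argument for why $str_1 = str_2$ confines all knowledge of $c$ to $p_x$ and $p_y$ is sound and consistent with the paper's contamination bookkeeping, but it cannot substitute for these two missing steps.
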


The main intuition behind the proof is that if communication were the same regardless of readers, $p_Y$ would be unable to distinguish readers from \emph{old} readers.
Suppose now by contradiction that $str_1 = str_2$. 
Then our construction of $\mathcal{E}$ allows us to construct an special execution $E^*$ based on $E_2$ (as well as $E_1$).
Let the subset of $\mathcal{D}$ for $E_i$ be $R_i$ for $i\in\{1,2\}$. W.l.o.g., $R_1\backslash R_2\neq \emptyset$. We construct $E^*$ such that clients in $R_1\backslash R_2$ are old readers 
(and show that $E^*$ breaks causal consistency due to old readers).

\pvs
~\\
\noindent\textbf{Execution $E^*$ with old readers.}
In $E^*$, both $R_1$ and $R_2$ issue ROT$(\{x,y\})$ at $t_1$. 
To distinguish between events (as well as messages) resulting from $R_1$ and $R_2$, we use superscripts $1$ and $2$ to denote the events, respectively. 
For simplicity of notations, in $E_2$, we call the two events at the server-side (for which $p_x$ and $p_y$ receive messages from $R_2$ respectively) also $r_x^2$ and $r_y^2$, illustrated in Figure \ref{fig:e2}.
In $E^*$, we now have four events at the server-side: $r_x^1$, $r_y^1$, $r_x^2$, $r_y^2$. 
We construct $E^*$ based on $E_2$ by scheduling $r_x^1$ and $r_y^2$ in $E^*$ at $t_2$ (the same time as $r_x^2$ and $r_y^2$ in $E_2$), and postponing $r^1_y$ (as well as $r_x^2$), as illustrated in Figure \ref{fig:e2star}. 
The ordering of events in $E^*$ is thus different from $E_2$. More specifically, the order is: $X_0$ and $Y_0$ are visible, $t_1$, $r^1_x = r^2_y = t_2$, PUT$(x, X_1)$ is issued, PUT$(y, Y_1)$ is issued, $\tau_{Y_1}$, $r^1_y$ (for every client in $R_1\backslash R_2$ as $r^2_y$ has occurred),  $r^2_x$ (for every client in $R_2\backslash R_1$, not shown in Figure \ref{fig:e2star}), 
$R_1\backslash R_2$ returns ROT. By asynchrony, the order is legitimate, which results in old readers $R_1\backslash R_2$.

\begin{proof}[Proof of Lemma \ref{lma:ineq}]
Our proof is by contradiction.
As $str_1 = str_2$, according to our construction, $p_y$ does not receive any message preceded by some different contaminated message in $E_1$ and $E_2$. Therefore even if we replace $r_x^2$ in $E_2$ for $r_x^1$ in $E^*$ (as in $E_1$), then by
$\tau_{Y_1}$, $p_Y$ is unable to distinguish between $E_2$ and $E^*$.

Previously, our construction of $E_2$ is until $\tau_{Y_1}$. Let us now extend $E_2$ so that $E_2$ and $E^*$ are the same after $\tau_{Y_1}$. Namely, in $E_2$, after $\tau_{Y_1}$, every client $c_1\in R_1\backslash R_2$ issues ROT$(\{x, y\})$; and as illustrated in Figure \ref{fig:pf}, $r^1_y$ is scheduled at the same time in $E_2$ and in $E^*$.

Let $\vec{v}$ be the return value of $c_1$'s ROT in either execution. By eventual visibility, in $E_2$, $v_y = Y_1$. 
We now examine $E^*$. By eventual visibility, as $t_1$ is after $X_0$ and $Y_0$ are visible, $v_x, v_y\neq \bot$. 
As $r^1_x$ is before PUT$(x, X_1)$ is issued, $v_x\neq X_1$. By $p_y$'s indistinguishability between $E_2$ and $E^*$,  and according to the one-version property, $v_y = Y_1$ as in $E_2$. 
Thus in $E^*$, $v_x = X_0$ and $v_y = Y_1$, a snapshot that is not causally consistent. A contradiction.
\end{proof}

Lemma \ref{lma:ineq} demonstrates a property for any two executions in $\mathcal{E}$, which implies another property of $\mathcal{E}$: if for any two executions, communication has to be different, then for all executions, the number of possibilities of what is communicated grows with the number of elements in $\mathcal{E}$. Recall that $|\mathcal{E}|$ is a function of $|\mathcal{D}|$. Hence, we connect the communication and $|\mathcal{D}|$ in Lemma \ref{lma:lower-bound}.


\begin{lemma}[Lower bound on the cost]
\label{lma:lower-bound}
Before PUT$(y, Y_1)$ completes, in at least one execution in $\mathcal{E}$, the communication of $p_x$ and $p_y$ takes at least $\mathcal{L}(|\mathcal{D}|)$ bits where $\mathcal{L}$ is a linear function.
\end{lemma}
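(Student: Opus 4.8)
The plan is to derive Lemma~\ref{lma:lower-bound} from Lemma~\ref{lma:ineq} by a pigeonhole/counting argument. By Lemma~\ref{lma:ineq}, any two distinct executions $E_1,E_2\in\mathcal{E}$ give rise to distinct strings $str_1\neq str_2$, where $str_i$ is the concatenation of the messages in $M_i$ (the messages that $p_x$ or $p_y$ send to a process outside $\mathcal{D}$ and that precede some message $p_y$ receives in $[t_1,\tau_{Y_1}]$). Hence the map $E\mapsto str(E)$ is injective on $\mathcal{E}$, so the number of distinct strings that can arise is at least $|\mathcal{E}| = 2^{|\mathcal{D}|}$. A code that must distinguish $2^{|\mathcal{D}|}$ objects needs, in the worst case (in fact on average), at least $\log_2(2^{|\mathcal{D}|}) = |\mathcal{D}|$ bits for at least one of its codewords; more precisely, in any injective encoding of a set of size $2^{k}$ into binary strings, at least one string has length $\geq k$. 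Applying this with $k = |\mathcal{D}|$ yields: there is an execution $E\in\mathcal{E}$ in which $str(E)$ has length at least $|\mathcal{D}|$ bits, i.e.\ the communication of $p_x$ and $p_y$ that precedes a message $p_y$ receives before $\tau_{Y_1}$ carries at least $|\mathcal{D}|$ bits. Since all of $str(E)$ is sent within $[t_1,\tau_{Y_1}]$, and $\tau_{Y_1}$ is the time PUT$(y,Y_1)$ completes, this communication happens before PUT$(y,Y_1)$ completes. Taking $\mathcal{L}(n) = n$ (or any linear lower bound one prefers to state) finishes the proof.

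First I would make precise the counting step: I would invoke the standard fact that if $f:\mathcal{S}\to\{0,1\}^{*}$ is injective then $\max_{s\in\mathcal{S}}|f(s)| \geq \lfloor\log_2|\mathcal{S}|\rfloor$, because there are only $2^{0}+2^{1}+\cdots+2^{\ell-1} = 2^{\ell}-1 < 2^{\ell}$ strings of length less than $\ell$. With $|\mathcal{S}| = |\mathcal{E}| = 2^{|\mathcal{D}|}$ this gives a codeword of length at least $|\mathcal{D}|$. Then I would note that the string $str(E)$ is by construction exactly the record of the relevant $p_x$/$p_y$ communication in execution $E$, so a long codeword means a large volume of such communication in that execution. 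I would also remark that, since different clients in $\mathcal{D}$ can each be a reader, $|\mathcal{D}|$ grows with the number of clients, so the bound $\mathcal{L}(|\mathcal{D}|)$ is linear in the number of clients, which is what the statement (and Theorem~\ref{thm:cost}) asserts.

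One subtlety I would address is that Lemma~\ref{lma:ineq} only guarantees $str_1\neq str_2$ as strings, not that the \emph{multiset} of messages differs, and not that each message is itself short; but for a bit-count lower bound this is irrelevant — injectivity of the encoding alone forces some codeword to be long, regardless of how the information is packed. A second subtlety is the timing claim: I must confirm that every message counted in $str(E)$ is sent no later than $\tau_{Y_1}$. This follows from the definition of $M_i$ in Lemma~\ref{lma:ineq}, which restricts to messages preceding some message $p_y$ receives during $[t_1,\tau_{Y_1}]$; the ``precedes'' relation is causal, so such messages are sent before $\tau_{Y_1}$, hence before PUT$(y,Y_1)$ completes.

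The main obstacle is not any single calculation but making the counting argument airtight against the objection that the ``communication'' might be reused or shared across the $2^{|\mathcal{D}|}$ executions: one must be careful that the lower bound is about a \emph{single} execution's communication, which is exactly what the $\max$ over the injective encoding delivers — we are not summing over executions, we are picking the worst one. I would therefore phrase the conclusion as: there exists $E\in\mathcal{E}$ whose associated $p_x$/$p_y$ communication (before PUT$(y,Y_1)$ completes) is at least $|\mathcal{D}|$ bits, and since $|\mathcal{D}|$ can be taken arbitrarily large as the client population grows, this establishes the claimed linear-in-the-number-of-clients lower bound, proving Lemma~\ref{lma:lower-bound} with $\mathcal{L}$ the identity (or any linear) function.
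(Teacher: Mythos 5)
Your proposal is correct and follows essentially the same route as the paper: both derive the bound from Lemma~\ref{lma:ineq} by observing that the map from the $2^{|\mathcal{D}|}$ executions in $\mathcal{E}$ to their $p_x$/$p_y$ communication strings is injective, and then applying the pigeonhole fact that any injective binary encoding of $2^{|\mathcal{D}|}$ objects forces some codeword to have length at least $|\mathcal{D}|$. Your added care about the counting step and the timing of the messages within $[t_1,\tau_{Y_1}]$ only makes explicit what the paper leaves implicit.
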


\begin{proof}[Proof of Lemma \ref{lma:lower-bound}]
We index each execution $E$ by the set $R$ of clients which issue ROT$(\{x,y\})$ at time $t_1$. We have therefore $2^{|\mathcal{D}|}$ executions: $\mathcal{E} = \{E(R)|R\subseteq\mathcal{D}\}$. Let $b(R)$ be the messages which $p_x$ and $p_y$ send in $E(R)$ as defined in Lemma \ref{lma:ineq}, and let $B = \{b(R)|R\subseteq\mathcal{D}\}$. By Lemma \ref{lma:ineq}, we can show that $\forall b_1, b_2\in B, b_1\neq b_2$. 
Then $|B| = |\mathcal{E}| = 2^{|\mathcal{D}|}$.
Therefore, it is impossible that every element in $B$ has fewer than $|\mathcal{D}|$ bits. In other words, 
in $\mathcal{E}$, we have at least one execution $E = E(R)$  where $b(R)$ takes at least $\log_2(2^{|\mathcal{D}|}) = |\mathcal{D}|$ bits, a linear function in $|\mathcal{D}|$.
\end{proof}

Recall that $|\mathcal{D}|$ is a variable that grows linearly with the number of clients.
Thus following Lemma \ref{lma:lower-bound}, we find $\mathcal{E}$ contains a worst-case execution that supports Theorem \ref{thm:cost} and we thus complete the proof of Theorem \ref{thm:cost}.

\pvs
~\\
\noindent\textbf{Remark on implementations.}
The proof shows the necessary communication of readers when each client issues one operation. 
Here we want to make the link back to the implementation of LO ROTs in CC-LO. The reader may wonder in particular about the relationship between the transaction identifiers that are sent as old readers in CC-LO, and the worst-case communication linear in the number of clients derived in the theorem. In fact, the CC-LO implementation considers that clients may issue multiple transactions at the same time, and then different ROTs of a single client should be considered as different readers, hence the use of transaction identifiers to distinguish one from another.

A final comment is on a straw-man implementation where each operation is attached to the output of a Lamport Clock \cite{Lamport:1978} (called logical time below) alone. 
Such implementation (without communication of potentially old readers) still fails. 
The problem is that the number of increments in logical time after ROTs is at most the number of all ROTs, i.e., $|\mathcal{D}|$. Then for some $E_1$ and $E_2$, Lemma \ref{lma:ineq} does not hold, i.e., the communication is the same.  Although when issuing the ROT, client $c$ in $R_1\backslash R_2$ can send logical time to servers, the logical time sent in $E_2$ and $E^*$ is the same and thus does not help $p_y$ to distinguish between $E_2$ and $E^*$, resulting in the violation of causal consistency again. Hence communication of readers, as Theorem \ref{thm:cost} indicates, is still required for this straw-man implementation.
\section{Related work}
\label{sec:rw}
\begin{table*}[ht]
\centering
\begin{tabular}{lcccccccc}
\hline
\multicolumn{1}{|l|}{\multirow{3}{*}{System}} & \multicolumn{3}{c|}{ROT latency optimality}                                                                                                                         & \multicolumn{4}{c|}{Write cost}                                                                                                   & \multicolumn{1}{c|}{\multirow{3}{*}{Clock}} \\ \cline{2-8}
\multicolumn{1}{c|}{}                        & \multicolumn{1}{c|}{\multirow{2}{*}{Nonblocking}} & \multicolumn{1}{c|}{\multirow{2}{*}{\#Rounds}} & \multicolumn{1}{c|}{\multirow{2}{*}{\#Versions}} & \multicolumn{2}{c|}{Communication}                                  & \multicolumn{2}{c|}{Meta-data}                              & \multicolumn{1}{c|}{}                            \\ \cline{5-8}
\multicolumn{1}{|c|}{}                        & \multicolumn{1}{l|}{}                          & \multicolumn{1}{l|}{}                       & \multicolumn{1}{l|}{}                            & \multicolumn{1}{l|}{$c\leftrightarrow s$} & \multicolumn{1}{c|}{$s\leftrightarrow s$} & \multicolumn{1}{l|}{$c\leftrightarrow s$} & \multicolumn{1}{l|}{s$\leftrightarrow $s} & \multicolumn{1}{l|}{}                            \\ \hline 
COPS~\cite{Lloyd:2011}                                          & \cmark                                         & $\leq 2$                                      & $\leq 2$                                           & 1                            & -                            & $|$deps$|$                   & -                            & Logical                                          \\
Eiger~\cite{Lloyd:2013}                                         & \cmark                                         & $\leq 2$                                      & $\leq 2$                                           & 1                            & -                            & $|$deps$|$                   & -                            & Logical                                          \\
ChainReaction~\cite{Almeida:2013}                                 & \xmark                                         & $\geq$ 2                                    & 1                                           & 1                            & $\geq$ 1                            & $|$deps$|$                   & M                         & Logical                                          \\
Orbe~\cite{Du:2013}                                          & \xmark                                              & 2                                           & 1                                                & 1                            & -                            & NxM                       & -                            & Logical                                          \\
GentleRain~\cite{Du:2014}                                    & \xmark                                         & 2                                         & 1                                           & 1                            & -                            & 1                         & -                            & Physical                                         \\
Cure~\cite{Akkoorath:2016}                                          & \xmark                                         & 2                                         & 1                                           & 1                            & -                            & M & -                            & Physical                                         \\
OCCULT$^\dagger$~\cite{Mehdi:2017}                                        & \cmark                                         & $\geq$ 1                                    & $\geq$1                                          & 1                            & -                            & O(P)                         & -                            & Hybrid                                           \\
POCC~\cite{Spirovska:2017}                                     & \xmark                                         & 2                                      & 1                                           & 1                            & -                         & M                   & -                         & Physical                                         \\
COPS-SNOW~\cite{Lu:2016}                                     & \cmark                                         & 1                                      & 1                                           & 1                            & O(N)                         & $|$deps$|$                   & O(K)                         & Logical                                          \\
\hline
\hline
{\bf Contrarian}                                    & \cmark                                         & 1 1/2 (or 2)                                        & 1                                           & 1                            & -                            & M                         & -                            & Hybrid                                          
\end{tabular}
\caption{Characterization of CC systems with ROTs support, in a geo-replicated setting. N, M and K represent, respectively, the number of partitions, DCs, and clients in a DC. $\dagger$ indicates a single-master system, and $P$ represents the number of DCs that act as master for at least one partition. $c \leftrightarrow s$, resp., $s \leftrightarrow s$, indicates client-server, resp. inter-server, communication. 
}\label{tab:rw}
\end{table*}
\noindent{\bf Causally consistent systems.} Table~\ref{tab:rw} classifies existing systems with ROT support according to the cost of performing ROT and PUT operations.
COPS-SNOW is the only latency-optimal system. COPS-SNOW achieves latency optimality at the expense of more costly writes, which carry detailed dependency information and incur extra communication overhead. 
Previous systems fail to achieve at least one of the sub-properties of latency optimality.

ROTs in COPS and Eiger might require two rounds of client-server communication to complete. The second round is needed if the client reads, in the first round, two items that might belong to different causally consistent snapshots. COPS and Eiger rely on fine-grained protocols to track and check the dependencies of replicated updates (see Section~\ref{sec:sys}), which have been shown to limit their scalability~\cite{Du:2013,Du:2014,Akkoorath:2016}. ChainReaction uses a potentially-blocking and potentially multi-round protocol based on a per-DC {\em sequencer} node. 
 
 Orbe, GentleRain, Cure and POCC use a coordinator-based approach similar to what described in Section~\ref{sec:sys}, and require two communications rounds. These systems use physical clocks and may block ROTs either because of clock skew or to wait for the receipt of remote updates.

Occult uses a primary-replica approach and use HLCs to avoid blocking due to clock skew. Occult implements ROTs that run in potentially more than one round and that potentially span multiple DCs (which makes the system not always-available). Occult requires at least one dependency timestamp for each DC that hosts a master replica.

Unlike these systems, Contrarian leverages HLCs to implement ROTs that are always-available, nonblocking and always complete in 1 1/2 (or 2) rounds of communication.


Other 
CC systems include SwiftCloud~\cite{Zawirski:2015}, 
Bolt-On~\cite{Bailis:2013}, Saturn~\cite{Bravo:2017}, Bayou~\cite{Petersen:1997,Terry:1995}, PRACTI~\cite{Belaramani:2006}, ISIS~\cite{Birman:1987}, lazy replication~\cite{Ladin:1992}, causal memory~\cite{Ahamad:1995}, EunomiaKV~\cite{Eunomia:2017} and CausalSpartan~\cite{CausalSpartan}. These systems either do not support ROTs, or  target a different model from the one considered in this paper, e.g., they do not implement sharding the data set in partitions. Our theoretical results require at least two partitions. Investigating the cost of LO in other system models is an avenue for future work.

CC is also implemented by systems that support different consistency levels~\cite{Crooks:2016}, implement strong consistency on top of CC~\cite{Balegas:2015}, and combine different consistency levels depending on the semantics of operations~\cite{Li:2014,Balegas:2016} or on target SLAs~\cite{Ardekani:2014,Terry:2013}. Our theorem provides a lower bound on the overhead of latency-optimal ROTs with CC. Hence, any system that implements CC or a strictly stronger consistency level cannot avoid such overhead. We are investigating how the lower bound on this overhead varies depending on the consistency level, and what is its  effect on performance.

\pvs
~\\\noindent{\bf Theoretical results on causal consistency.} Causality was introduced by Lamport~\cite{Lamport:1978}. Hutto and Ahamad~\cite{Hutto:1990} provided the first definition of causal consistency, later revisited from different angles~\cite{Mosberger:1993,Adya:1999,Crooks:2016,Viotti:2016}. Mahajan et al. have proved that real-time CC is the strongest consistency level that can be obtained in an always-available and one-way convergent system~\cite{Mahajan:2011}. Attiya et al. have introduced the observable CC model and have shown that it is the strongest that can be achieved by an eventually consistent data store implementing multi-value registers~\cite{Attiya:2015}.

The SNOW theorem~\cite{Lu:2016} shows that LO can be achieved by any system that $i)$ is not strictly serializable~\cite{Papadimitriou:1979} or $ii)$ does not support write transactions. Based on this result, the SNOW paper suggests that any protocol that matches one of these two conditions can be {\em improved} to be latency-optimal. The SNOW paper {\em indicates} that a way to achieve this is to shift the overhead from ROTs to writes. 
In this paper, we {\em prove} that achieving latency optimality in CC implies an extra cost on writes, which is inherent and significant. 

Bailis et al. study the overhead of replication and dependency tracking in geo-replicated CC systems~\cite{Bailis:2012b}. By contrast, we investigate the inherent cost of latency-optimal CC designs, i.e., even in absence of (geo-)replication.

\section{Conclusion}
\label{sec:conclusion}

Causally consistent read-only transactions are an attractive primitive for large-scale systems, as 
 they eliminate a number of anomalies and facilitate the task of developers.
Furthermore, given that most applications are expected to be read-dominated, low latency of read-only transactions is of paramount importance to overall system performance.
It would therefore appear that  {\em latency-optimal} read-only transactions, which provide a nonblocking, single-version and single-round implementation, are  particularly appealing.
The catch is that these latency-optimal protocols impose an overhead on writes that is so high that it jeopardizes performance, even in read-heavy workloads.

In this paper, we present an ``almost latency-optimal'' protocol that maintains the nonblocking and one-version aspects of their latency-optimal counterparts, but sacrifices the one-round property and instead runs in one and a half rounds. On the plus side, however, this protocol avoids the entire overhead that latency-optimal protocols impose on writes. As a result, measurements show that this ``almost latency-optimal'' protocol outperforms latency-optimal protocols, not only in terms of throughput, but also in terms of latency, for all but the lowest loads and the most read-heavy workloads. 

In addition, we show that the overhead of the latency-optimal protocol is inherent. In other words, it is not an artifact of current implementations. In particular, we show that this overhead grows linearly with the number of clients.

\bibliographystyle{acm}
\bibliography{biblio}

\end{document}